\newtheorem{remark}{Remark}
\newtheorem{lemma}{Lemma}
\DeclareMathOperator*{\minimize}{minimize}
\newcolumntype{C}[1]{>{\centering\arraybackslash}p{#1}}
\newcolumntype{L}{>{\raggedright\arraybackslash}X}
\newrobustcmd{\B}{\bfseries}
\definecolor{LightBlue}{rgb}{0.75,0.936,1.00}
\definecolor{LightCyan}{rgb}{0.88,1,1}
\begin{document}
\bstctlcite{IEEEexample:BSTcontrol}
\title{Rate-Splitting for Joint Unicast and Multicast Transmission in LEO Satellite Networks \\ with Non-Uniform Traffic Demand}
\author{Jaehyup Seong, Juha Park, Dong-Hyun Jung, Jeonghun Park, and Wonjae Shin \vspace{-10mm}
    \thanks{J. Seong, J. Park, and W. Shin are with the School of Electrical Engineering, Korea University, Seoul 02841, South Korea 
    (email: {\texttt{\{jaehyup, juha, wjshin\}@korea.ac.kr}});
    D. Jung is with the Radio and Satellite Research Division, Electronics and Telecommunications Research Institute, Daejeon 34129, South Korea (email: {\texttt{dhjung@etri.re.kr}});
    J. Park is with the School of Electrical and Electronic Engineering, Yonsei University, Seoul 03722, South Korea (email: {\texttt{jhpark@yonsei.ac.kr}}).
    (\textit{Corresponding author: Wonjae Shin.})}} 
\maketitle
\vspace{-2mm}
\begin{abstract}\vspace{-1.5mm}
Low Earth orbit (LEO) satellite communications (SATCOM) with ubiquitous {global} connectivity is deemed a pivotal catalyst in advancing wireless communication systems for {5G and beyond}.
LEO SATCOM excels in delivering versatile information services across expansive areas, facilitating both unicast and multicast transmissions via high-speed broadband capability.
Nonetheless, given the broadband coverage of LEO SATCOM, traffic demand distribution within the service area {is non-uniform}, and the time/frequency/power resources available at LEO satellites remain significantly limited. 
%
%
Motivated by {these challenges}, we propose a rate-matching framework for non-orthogonal unicast and multicast (NOUM) transmission. {Our approach aims to minimize} the difference between offered rates and traffic demands for both unicast and multicast messages.
{By} multiplexing unicast and multicast transmissions over the same radio resource, rate-splitting multiple access (RSMA) is employed to manage interference between unicast and multicast streams{, as well as} inter-user interference under imperfect channel state information at the LEO satellite.
%
%
To address the formulated problem's non-smoothness and non-convexity, the common rate is approximated using the LogSumExp technique{. Thereafter,} we represent the common rate portion as the ratio of the approximated function, converting the problem into an unconstrained form. 
A generalized power iteration (GPI)-based algorithm, coined \emph{GPI-RS-NOUM}, is proposed upon this reformulation. 
Through comprehensive numerical analysis across diverse simulation setups, we {demonstrate}  that the proposed framework outperforms various benchmarks for LEO SATCOM with uneven traffic demands.



\end{abstract}

\begin{IEEEkeywords}\vspace{-1.5mm}
NOUM transmission, LEO SATCOM, rate-matching, RSMA, heterogeneous traffic demands.
\end{IEEEkeywords}

\IEEEpeerreviewmaketitle
\section{Introduction}
With the explosive {growth} in wireless applications, the demand for high throughput, content-oriented service, and heterogeneous service types is continuously increasing \cite{zhong2018traffic}. 
In recent years, low Earth orbit (LEO) satellite communications (SATCOM) {have garnered} significant attention for {their} capability to deliver high-speed broadband services with low latency across expansive coverage areas \cite{perez2019signal}. 
LEO SATCOM {not only provides} ubiquitous {global} connectivity in remote regions but also alleviates congestion in dense urban areas, ensuring resilient communication services. Additionally, it facilitates the delivery of diverse multicast services, including video streaming, live broadcasting, and disaster alert messages, across extensive service areas \cite{kawamoto2014prospects}. 
Therefore, LEO SATCOM excels in {reliably} providing various information services, such as unicast and multicast services. In LEO SATCOM, however, available time/frequency resources are highly limited, and the number of users within the target service area ($\num{100}-\num{1000}$ km) is much larger compared to terrestrial networks (up to $\num 10$ km).
{Consequently}, providing unicast and multicast services with {distinct} time/frequency blocks cannot {adequately address} the dramatically growing demands of
fifth-generation (5G) and {beyond} wireless communications.

\subsection{Related Works}

Non-orthogonal unicast and multicast (NOUM) transmission, {which} simultaneously offers both unicast and multicast services with the same resource blocks, is {expected} to play a vital role in LEO SATCOM.  
In NOUM transmission, the multicast and unicast streams are superimposed by layered division multiplexing (LDM) {at the transmitter}. At the receiver, the multicast stream is first decoded {by treating} the unicast streams as noise based on the principle of LDM \cite{kim2008superposition}.
The decoded multicast stream is {then} removed from the received signal {using} the successive interference cancellation (SIC) technique, and then the intended unicast stream is decoded by treating the other unicast streams as noise. 
In integrated satellite-terrestrial networks (ISTN), NOUM transmission has been shown to outperform orthogonal unicast and multicast (OUM) transmission, such as time division multiplexing (TDM), in minimizing transmit power \cite{peng2022non}.
Nevertheless, fully reusing available resources for both services can cause severe interference between the multicast and unicast messages, along with interference among the unicast messages. Accurate channel state information (CSI) at the transmitter (CSIT) is essential to mitigate {interference issues} with conventional multiple access techniques, such as spatial division multiple access (SDMA) and non-orthogonal multiple access (NOMA).
In LEO SATCOM, however, obtaining precise CSIT is usually infeasible due to long propagation delay and rapid movement of satellites \cite{vazquez2016precoding, you2020massive}.
To manage interference issues under imperfect CSIT conditions {in} LEO SATCOM, an advanced multiple access technique for NOUM transmission is required.




{Rate-splitting multiple access (RSMA) stands out as a promising solution to address interference issues with imperfect CSIT.} {RSMA has been shown to} ensure robustness over imperfect CSI conditions{, as well as,} high spectral utilization and energy consumption efficiency in {various} network scenarios and propagation conditions \cite{mao2018rate, clerckx2016rate, mao2018energy, park2023rate}.
RSMA can {encompass} conventional multiple access {techniques,} such as SDMA, NOMA, and multicasting, as special cases by adjusting the portion of common and private parts. Thanks to its flexibility and generality, RSMA has great potential for {enhancing} rate and quality of service (QoS) \cite{clerckx2019rate}.
{Combined with multibeam SATCOM,
{various} RSMA-based unicast or multicast beamforming strategies have been {developed to enhance} sum-rate maximization or max-min fairness (MMF) \cite{yin2020rate, yin2020rate_J, yin2021ratephy, si2022rate, khan2023rate, xu2023distributed}}. 

{Inspired by the advantages of RSMA in unicast and multicast transmission, the applications of RSMA in NOUM transmission have been studied \cite{mao2019rate, li2023non, li2023cooperative, han2024joint}.} RSMA-based NOUM transmission {has been verified to outperform} SDMA- and NOMA-based NOUM transmission {in terms of} sum-rate and energy efficiency \cite{mao2019rate}.  
{An additional advantage of RSMA-based NOUM transmission is that it can be realized by incorporating a multicast message into the common stream \cite{mao2019rate}.}
Therefore, the SIC layer of RSMA can be used not only {to manage} the interference among the unicast messages but also {to separate}  the multicast and unicast messages. 
{It is important to note} that the SIC structure of RSMA does not increase complexity for the receivers compared to conventional NOUM transmission{, as the SIC layer is required} for separating the multicast and unicast messages \cite{mao2019rate}. 
{Leveraging the numerous advantages of RSMA for NOUM transmission, RSMA-based NOUM transmission has found widespread applications in LEO SATCOM systems \cite{li2023non, li2023cooperative, han2024joint}.} 
The authors of \cite{li2023non} have focused on maximizing the sum of the unicast rates {while meeting} the QoS constraint of multicast rate in LEO SATCOM with perfect CSIT. 
The problem of maximizing the minimum unicast rate while satisfying multicast traffic demand under the QoS constraint has been investigated in ISTN with perfect CSIT at both the base station and LEO satellite \cite{li2023cooperative, han2024joint}.


\subsection{Motivations and Contributions}

While {significant} research efforts have been invested in designing NOUM transmissions for LEO SATCOM under fairly homogeneous traffic conditions \cite{li2023non, li2023cooperative, han2024joint}, less attention has been dedicated to the heterogeneity of traffic demands over time and geographical locations.
{The traffic distribution within the service area of LEO SATCOM tends to be highly asymmetric due to its broadband coverage capability \cite{lizarraga2014flexibility}.}
{Considering such heterogeneity of traffic demands, the precoder designed to maximize the minimum or sum unicast rate can {lead to} \emph{unmet rate} (i.e., the amount of unsatisfied rate to traffic demand) and \emph{unused rate} (i.e., the amount of exceeded rate to traffic demand).}
{This results in a significant degradation of the unicast service reliability.}
Therefore, a new performance metric for precoder design is required to effectively reduce such unmet and unused rates, {especially given a power-hungry payload at the LEO satellite.}
Moreover, obtaining precise CSIT at the LEO satellite within a coherence time is {quite} challenging due to the significant end-to-end propagation delay and the {rapid movement of LEO satellites}  \cite{vazquez2016precoding, you2020massive}. 
As such, it is necessary to carefully consider the impact of imperfect CSIT conditions at the LEO satellite, a facet yet to be {addressed} in previous studies \cite{li2023non, li2023cooperative, han2024joint}, on precoder design.


{Motivated by these challenges}, we put forth an RSMA-based rate-matching (RM) framework, which minimizes the difference between {traffic demands and offered rates} for both unicast and multicast messages, under imperfect CSIT.
{This framework enables the LEO satellites to stably fulfill} uneven unicast traffic demands and effectively provides {the} intended multicast message with limited available power.
{Our} key contributions are summarized as follows: 
\begin{itemize}
\item  We propose an RSMA-based RM framework that minimizes the difference between traffic demands and actual offered rates for both unicast and multicast messages. 
By flexibly allocating the usable power into the common and private streams according to the traffic requirements, the unused/unmet rates are {effectively minimized.}
{To cope with the challenge of obtaining instantaneous CSIT at LEO satellites, we leverage the statistical and geometrical information of satellite-to-user channels, which vary comparably slower, in the RSMA precoder design.}

\item 
To jointly find an optimal precoding vector and common rate portions of each unicast/multicast message, we propose a generalized power iteration (GPI)-based rate-splitting for NOUM transmission (GPI-RS-NOUM) algorithm. Specifically, to tackle the non-smoothness from the minimum function raised by the common rate, we approximate the minimum function with the LogSumExp technique, making it differentiable. To {address} the non-convexity caused by multiple constraints upon the common rate, we represent the common rate portions as a ratio of the approximated function, thereby converting the formulated problem into an unconstrained form. 
{We then express the first-order Karush-Kuhn-Tucker (KKT) condition of the reformulated problem as an eigenvector-dependent nonlinear eigenvalue problem (NEPv).
By applying the principle of conventional power iteration, we propose the GPI-RS-NOUM algorithm that efficiently computes a principal eigenvector of the expressed NEPv.}
\item 
We numerically {demonstrate the effectiveness of the proposed  GPI-RS-NOUM algorithm in} minimizing the difference between actual offered rates and traffic demands for both unicast and multicast messages.  
{Through} comprehensive numerical analysis spanning diverse LEO SATCOM scenarios, including variations in traffic distributions, user locations, scattering conditions, and the number of transmit antennas, the superiority of the proposed framework over several benchmarks is {shown}.
{ We demonstrate that the common stream plays crucial triple functions: i) managing interference between unicast and multicast streams and inter-user interference, ii) enabling multicast stream transmission, and iii) ensuring robustness against imperfect CSIT.} This leads us to conclude that RSMA is a formidable multiple access technique for NOUM transmission in LEO SATCOM.
\end{itemize}

\subsection{Notations}    
The notations employ standard letters for scalars, lower-case boldface letters for vectors, and upper-case boldface letters for matrices.
The matrix $\mathbf{A}={\sf{blkdiag}}(\mathbf{A}_{1}, \cdots, \mathbf{A}_{K})\in\mathbb{C}^{NK \times NK}$ represents the block-diagonal matrix composed of $\mathbf{A}_{1}, \cdots, \mathbf{A}_{K}\in\mathbb{C}^{N \times N}$.
Notations $(\cdot)^{\sf{T}}$,  $(\cdot)^{\sf{H}}$, $(\cdot)^{-1}$, $\mathbb{E}[\cdot]$, and $\exp\{\cdot\}$ identify the transpose, conjugate transpose, matrix inversion, expectation, and exponential operators, respectively.
{Additionally, $\mathbf{I}$ denotes the identity matrix, with its size determined by the dimension of the matrix it operates on.}

\section{System Model}

We consider an LEO SATCOM system in which the LEO satellite is equipped with uniform planar arrays (UPAs)
{consisting of}  $N_{\sf{t}} \triangleq N_{\sf{t}}^x \times N_{\sf{t}}^y$ antennas. Herein, $N_{\sf{t}}^x$ and $N_{\sf{t}}^y$ {represent} the number of antennas on the $x$-axis and $y$-axis, respectively.
Within the coverage area, there are $K$ users{, each} equipped with a single antenna (indexed by $\mathcal{K} \triangleq \{ 1,\cdots, K \}$). {Each user desires}   to receive not only a multicast message (intended for all users) but also a dedicated unicast message. 
{The LEO satellite provides both types of messages {using the same} time/frequency resources. The traffic demands for each unicast message are heterogeneous, as {illustrated} in Fig. \ref{Fig1}. It is assumed that the traffic demands for both unicast and multicast messages are perfectly known at the LEO satellite.}



\begin{figure}[!t]
\centering
 \includegraphics[width=0.9\linewidth]{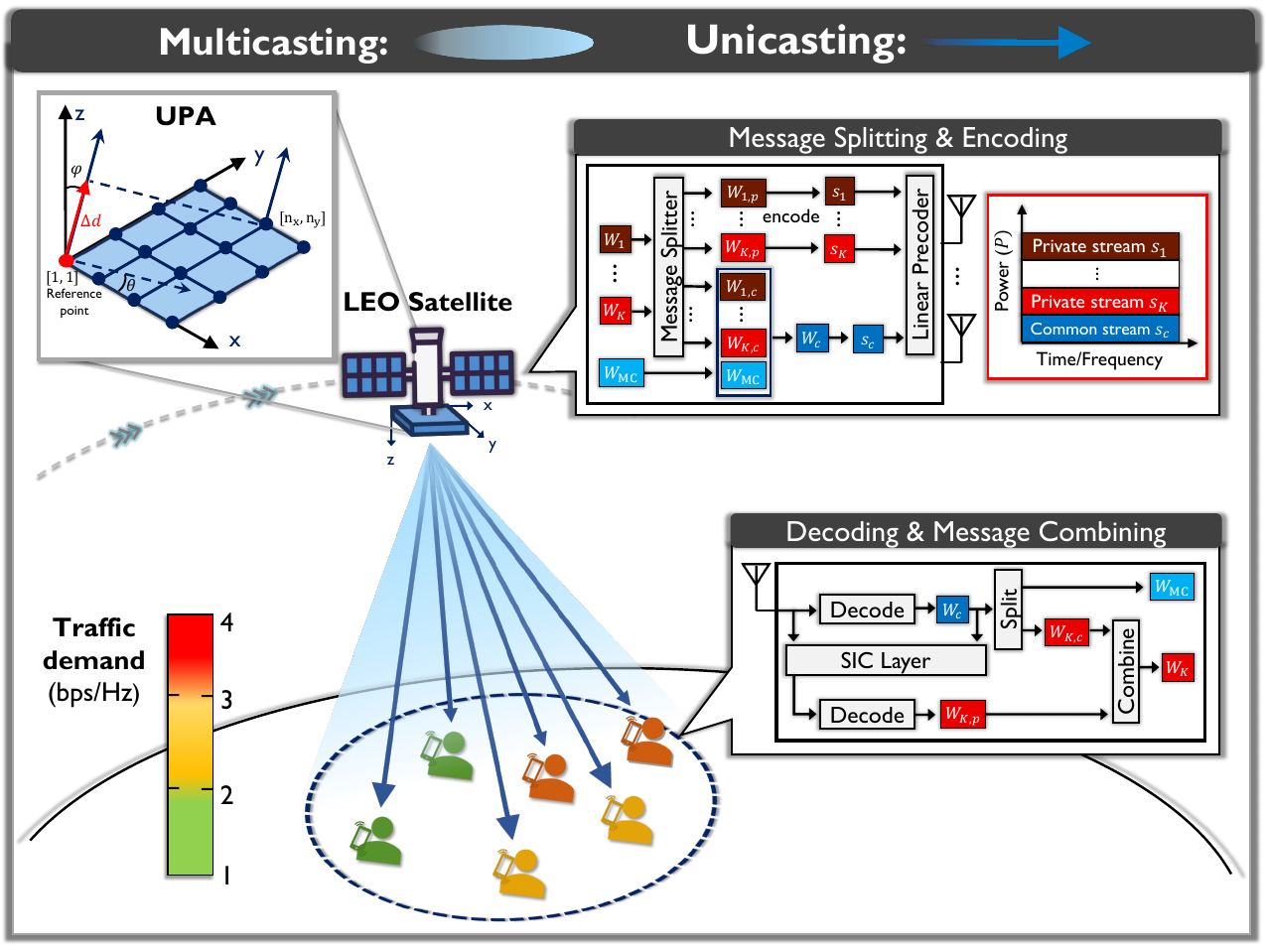}
 		\caption{System model of the proposed RSMA-based NOUM transmission.}
    	\label{Fig1}\vspace{-2mm}
\end{figure}

\subsection{Satellite Channel Model}

To characterize a satellite channel, a widely adopted ray-tracing-based model is employed. 
The received baseband signal at the $k$-th user from the $(n_{x}, n_{y})$-th antenna, {in the absence of noise, can be expressed as}
\begin{align}
    \label{ch_1}
     y_{k}^{[n_{x}, n_{y}]}(t) = 
     & \sum_{l=0}^{L_{k}-1} 
     \alpha_{k, l} \cdot x(t-\tau_{k,l}^{[n_{x}, n_{y}]})
     \nonumber \\
     & \times \exp(-j2\pi f_{\sf{c}}\tau_{k,l}^{[n_{x}, n_{y}]}) \cdot \exp(j2\pi f^{\sf{D}}_{k, l}t),
\end{align}
where $L_{k}$, $\alpha_{k, l}$, and $f_{\sf{c}}$ {denote} the number of propagation paths, path attenuation over the $l$-th path, and a carrier frequency, respectively. 
$\tau_{k,l}^{[n_{x}, n_{y}]}$ is the propagation delay from the $(n_{x}, n_{y})$-th antenna to the $k$-th user over the $l$-th path. $x(t-\tau_{k,l}^{[n_{x}, n_{y}]})$ is the transmitted signal from the $(n_{x}, n_{y})$-th antenna delayed by $\tau_{k,l}^{[n_{x}, n_{y}]}$. $f^{\sf{D}}_{k, l}$ indicates the Doppler shift over the $l$-th propagation path. Without loss of generality, the propagation delays are sorted in the ascending order, {such that} $\tau_{k, 0}^{[n_{x}, n_{y}]} \leq \cdots \leq \tau_{k, L_{k}-1}^{[n_{x}, n_{y}]}$.

{Due to the high altitude of the LEO satellite}, LEO SATCOM {typically operates under channel conditions  dominated by} line-of-sight (LOS) components. This feature {results in a much smaller delay spread} compared to conventional terrestrial networks, as measured in \cite{3gpp_dealy, vojcic1994performance}. 
Therefore, $x(t-\tau_{k,l}^{[n_{x}, n_{y}]})$ can be {approximated as} $x(t-\tau_{k,l}^{[n_{x}, n_{y}]}) \approx x(t-\tau_{k,0}^{[n_{x}, n_{y}]})$, $\forall l$, reformulating (\ref{ch_1}) into 
\begin{align}
    \label{ch_2}
    y_{k}^{[n_{x}, n_{y}]}(t) = h_{k}^{[n_{x}, n_{y}]}(t) \cdot x(t-\tau_{k,0}^{[n_{x}, n_{y}]}),
\end{align}
where $h_{k}^{[n_{x}, n_{y}]}(t)$ is the channel between the $k$-th user and the $(n_{x}, n_{y})$-th antenna, given by
\begin{align}
    \label{ch_3}
     h_{k}^{[n_{x}, n_{y}]}(t) \! = \!\!
     &\sum_{l=0}^{L_{k}-1} \! 
     \alpha_{k, l} \cdot
     \exp\left(-j2\pi \{ f_{\sf{c}}\tau_{k,l}^{[n_{x}, n_{y}]} \! - \! f^{\sf{D}}_{k, l}t \} \right).
\end{align}


In (\ref{ch_3}), the propagation delay from the $(n_{x}, n_{y})$-th antenna to the $k$-th user over the $l$-th path can be rewritten as
\begin{align}
    \label{ch_3.5}
    \tau_{k,l}^{[n_{x}, n_{y}]}
    = \tau_{k,l}^{[1, 1]} + \Delta \tau^{[n_{x}, n_{y}]}(\theta_{k, l}, \varphi_{k, l}),
\end{align}
where $\tau_{k,l}^{[1, 1]}$ is the  propagation delay from the reference point to the $k$-th user over the $l$-th path as shown in Fig. \ref{Fig1}.
{$\Delta \tau^{[n_{x}, n_{y}]}(\theta_{k, l}, \varphi_{k, l})$ is the time difference of arrival (TDoA) between the reference point and the $(n_x, n_y)$-th antenna. $(\theta_{k, l}, \varphi_{k, l})$ is angle-of-departure (AoD) pair of the $k$-th user over the $l$-th path, where $\theta_{k, l}$ and $\varphi_{k, l}$ are the azimuth and off-nadir angles.}
Owing to the high altitude of the LEO satellite, the angle of multiple propagation paths for a certain user can be assumed to be identical, i.e., $\Delta \tau^{[n_{x}, n_{y}]}(\theta_{k, l}, \varphi_{k, l}) \approx \Delta \tau^{[n_{x}, n_{y}]}(\theta_{k}, \varphi_{k})$, $\forall l$ \cite{you2020massive}. 
Accordingly, (\ref{ch_3}) is reformulated as
\begin{align}
     \label{ch_4}
     h_{k}^{[n_{x}, n_{y}]}(t) & = \sum_{l=0}^{L_{k}-1} 
     \alpha_{k, l} \cdot
     \exp\left(-j2\pi \left\{ f_{\sf{c}} \tau_{k,l}^{[1, 1]}
     - f^{\sf{D}}_{k, l}t \right\} \right)
     \nonumber \\
     & \,\,\,\,\,\,\, \times \exp\left(-j2\pi f_{\sf{c}} \cdot \Delta \tau^{[n_{x}, n_{y}]}(\theta_{k}, \varphi_{k})\right).
\end{align}
Then, with $\tau_{k, 0}^{[1,1]} = \min_l{\tau_{k, l}^{[1,1]}}$, (\ref{ch_4}) can be rewritten as
\begin{align}
     \label{ch_4.5}
     h_{k}^{[n_{x}, n_{y}]}(t) = & \sum_{l=0}^{L_{k}-1} 
     \alpha_{k, l} \cdot
     \exp\left(-j2\pi \left\{ f_{\sf{c}} \tau_{k, 0}^{[1,1] }
     - f^{\sf{D}}_{k, l}t \right\} \right)
     \nonumber \\
     &  \times \exp\left(-j2\pi \left\{ f_{\sf{c}} \left( \tau^{[1, 1]}_{k, l} - \tau^{[1, 1]}_{k, 0} \right) \right\} \right) 
     \nonumber \\
     &  \times \exp\left(-j2\pi f_{\sf{c}} \cdot \Delta \tau^{[n_{x}, n_{y}]}(\theta_{k}, \varphi_{k})\right).
\end{align}

Given the movement of the LEO satellite and user, $f^{\sf{D}}_{k, l}$ is composed of two independent Doppler shifts arising from the mobility of the LEO satellite $(f^{\sf{D}}_{k, l})^{\sf {sat}}$ and the user $(f^{\sf{D}}_{k, l})^{\sf {user}}$, {such that} $f^{\sf{D}}_{k, l} = (f^{\sf{D}}_{k, l})^{\sf {sat}}+ (f^{\sf{D}}_{k, l})^{\sf {user}} $\cite{you2020massive}.
It is rational to assume identical $(f^{\sf{D}}_{k, l})^{\sf {sat}}$ across multiple propagation paths, i.e., $(f^{\sf{D}}_{k, l})^{\sf {sat}}=(f^{\sf{D}}_{k})^{\sf {sat}}, \forall l$ due to the high altitude of the LEO satellite \cite{you2020massive}.
Thus, the equation (\ref{ch_4.5}) can be rewritten as 
\begin{align}
    \label{ch_5}
     h_{k}^{[n_{x}, n_{y}]}(t) 
     = & 
     \exp\left(-j2\pi \left\{ f_{\sf{c}} \tau_{k, 0}^{[1,1]} - (f^{\sf{D}}_{k})^{\sf {sat}}t \right\} \right)
     \nonumber \\
     & \times 
     g_{k}(t) \cdot
     a^{[n_{x}, n_{y}]}(\theta_{k}, \varphi_{k}),
\end{align}
where $g_{k}(t)$ and $a^{[n_{x}, n_{y}]}(\theta_{k}, \varphi_{k})$ are respectively defined as
\begin{align}
    \label{ch_6}
    & \!\!\! g_{k}(t) \! \triangleq \!\! 
    \nonumber \\ 
    & \!\!\! \sum_{l=0}^{L_{k}-1} \! \alpha_{k, l} \! \cdot  \exp \! \left(-j2 \pi \! \left\{ f_{\sf{c}} \left( \tau^{[1, 1]}_{k, l} \! - \! \tau^{[1, 1]}_{k, 0} \right)  \! - \! (f^{\sf{D}}_{k, l})^{\sf{user}}t \right\} \right),
    \\ 
    \label{ch_7}
    & \!\!\! a^{[n_{x}, n_{y}]}(\theta_{k}, \varphi_{k}) \triangleq \exp\left(-j2\pi f_{\sf{c}} \cdot \Delta \tau^{[n_{x}, n_{y}]}(\theta_{k}, \varphi_{k})\right).
\end{align}
In the equation (\ref{ch_7}), the TDoA between the reference point and the $(n_{x}, n_{y})$-th antenna can be derived by calculating the propagation distance difference between the reference point and the $(n_{x}, n_{y})$-th antenna, $\Delta d^{[n_{x}, n_{y}]}(\theta_{k}, \varphi_{k})$, as follows:
\begin{align}
    \label{ToA1}
    & \Delta \tau^{[n_{x}, n_{y}]}(\theta_{k}, \varphi_{k}) 
    = 
    \frac{\Delta d^{[n_{x}, n_{y}]}(\theta_{k}, \varphi_{k})}{c}  
    \nonumber \\
    & = \frac{\sin\varphi_{k}[(n_{x}-1)d_{x} \sin\theta_{k} + (n_{y}-1)d_{y} \cos\theta_{k}]}{c},
\end{align}
where $c$ denotes the speed of light \cite{jiang2022initial}.
Therefore, the equation (\ref{ch_7}) can be rewritten as 
\begin{align}
    \label{ToA2}
     &a^{[n_{x}, n_{y}]}(\theta_{k}, \varphi_{k}) 
     \nonumber \\
     &= 
     \exp(-j \frac{2\pi}{\lambda} \sin\varphi_{k}[(n_{x} \! - \! 1)d_{x} \sin\theta_{k} + (n_{y} \! - \! 1)d_{y} \cos\theta_{k}])
     \nonumber \\
     &\overset{(a)}{=} 
     \exp(-j \pi \sin\varphi_{k}[(n_{x} \! - \! 1) \sin\theta_{k} + (n_{y} \! - \! 1) \cos\theta_{k}]),
\end{align}
where $(a)$ follows from the half-wavelength antenna spacing, i.e., $d_{x} = d_{y} = \frac{\lambda}{2}$.
{The array response vector of the $k$-th user
$\mathbf{a}_{k}$, which is a stacked version of $a^{[n_x,n_y]}(\theta_k,\varphi_k)$,
can be represented using the Kronecker product as follows:} 
\begin{equation}
\label{steering vector}
    \mathbf{a}_{k} = \mathbf{a}_{k}^{x} \otimes \mathbf{a}_{k}^{y} \in \mathbb{C}^{N_{\sf{t}}\times1}.
\end{equation}
In (\ref{steering vector}), $\mathbf{a}_{k}^{x}\in\mathbb{C}^{N_{\sf{t}}^{x}\times1}$ and $\mathbf{a}_{k}^{y}\in\mathbb{C}^{N_{\sf{t}}^{y}\times1}$ respectively indicate
\begin{align}
&\mathbf{a}_{k}^{x}
\triangleq 
\left[1,e^{-j\pi\sin\varphi_k\cos\theta_k},
\cdots ,  e^{-j\pi\left(N_{\sf{t}}^x-1\right)\sin\varphi_k\cos\theta_k}\right]^{\sf T},
\nonumber \\
&\mathbf{a}_{k}^{y}
\triangleq  
\left[1,e^{-j\pi\sin\varphi_k\sin\theta_k},
 \cdots ,  e^{-j\pi\left(N_{\sf{t}}^y-1\right)\sin\varphi_k\sin\theta_k}\right]^{\sf T}.
 \nonumber
\end{align}
Therefore, the channel vector of the $k$-th user is given by 
\begin{align}
    \label{ch_8}
    \mathbf{h}_{k}(t) \! =   \exp  \left(-j2\pi f_{\sf{c}} \left\{ \tau_{k, 0}^{[1,1]} - (f^{\sf{D}}_{k})^{\sf {sat}} \right\} \right) \cdot
     g_{k}(t) \cdot
     \mathbf{a}_{k}.
\end{align}
In (\ref{ch_8}), $\tau_{k, 0}^{[1,1]}$ and $(f^{\sf{D}}_{k})^{\sf {sat}}$ can be compensated with proper time and frequency synchronization as depicted in \cite{you2020massive, li2021downlink, you2022hybrid, you2022beam}. Hence, the channel expression (\ref{ch_8}) can be rewritten as 
\begin{align}
    \label{ch_9}
    \mathbf{h}_{k}(t) = g_{k}(t) \cdot
    \mathbf{a}_{k} \in \mathbb{C}^{{N_{\sf{t}}}\times{1}}.
\end{align}

In (\ref{ch_9}), it can be observed that the statistical properties of the channel are primarily determined by the propagation environment vicinity users.
Moreover, $g_k(t)$ can be postulated to conform to the Rician fading owing to the favorable LOS condition of LEO SATCOM. {From such features, the real and imaginary parts of $g_k(t)$ exhibit to follow an independent and identically distributed (i.i.d.) Gaussian distribution such that ${\sf{Re}}\{g_k(t)\}, {\sf{Im}}\{g_k(t)\} \sim \mathcal{N}\left(\sqrt{\frac{\kappa_k\gamma_k}{2(\kappa_k+1)}},\frac{\gamma_k}{2(\kappa_k+1)}\right)$ with the Rician K-factor $\kappa_k$.}
The average channel power is set as 
\begin{align}\label{channel_gamma}
\mathbb{E}\left[\vert g_k(t) \vert ^2\right] = \gamma_k = \frac{G_{\sf{Tx}}{G_{\sf Rx}}}
{ (4\pi {f_{\sf{c}}} {d_k} / {c})^{2} k_{\sf{B}} {T_{\sf sys}} {B}},
\end{align}
where $G_{\sf Tx}$, $G_{\sf Rx}$, $d_k$, $k_{\sf{B}}$, $T_{\sf sys}$, and $B$ denote the transmit antenna gain, user terminal antenna gain, satellite-to-$k$-th user distance, Boltzmann constant, system noise temperature, and bandwidth, respectively. 
{Given the severe propagation delay and short channel coherence time of
LEO SATCOM, obtaining instantaneous CSIT is usually infeasible. 
{Instead, we presume that the geometry relations of the satellite-to-users (i.e., $\theta_k$, $\varphi_k$, $\forall k \in \mathcal{K}$) and the statistical information of channel gain (i.e., $\mathbb{E}\left[\vert g_k(t) \vert ^2\right] = \gamma_k$, $\forall k \in \mathcal{K}$), which vary comparably slower, are available at the LEO satellite
\cite{you2020massive, roper2022beamspace, li2021downlink, you2022hybrid, you2022beam}.}
Since we focus on investigating the system within a certain fading block, we rewrite the channel vector (\ref{ch_9}) as $\mathbf{h}_{k} = g_{k}\mathbf{a}_{k}$ from the following subsection onwards.}

\subsection{Signal Model for RSMA-Based NOUM Transmission}

With a rate-splitting strategy for NOUM transmission, unicast messages of each user $W_{1}, \cdots, W_{K}$ are firstly split
into common and private parts as $W_k \rightarrow \{W_{k,{\sf{c}}}, \,\, W_{k,{\sf{p}}} \}$, $\forall k \in \mathcal{K}$. 
The common messages $W_{1,{\sf{c}}}, \cdots, W_{K,{\sf{c}}}$ and multicast message $W_{\sf{mc}}$ are combined into one common message $W_{\sf{c}}$ and then encoded into a common stream $s_{\sf{c}}$ using a codebook that is known by all users. 
On the other hand, the private messages are encoded into private streams $s_1, \cdots, s_K$ intended for each user, using the dedicated codebooks. Through the precoding vectors $\mathbf{f}_{\sf{c}}, \mathbf{f}_1,\cdots,\mathbf{f}_K \in \mathbb{C}^{{N_{\sf{t}}}\times{1}}$, the stream vector $\mathbf{s}=[s_{\sf{c}}, s_1, \cdots, s_K]^{\sf{T}}\in \mathbb{C}^{{(K+1)}\times{1}}$ is linearly combined as 
\begin{align}
\label{TSEq}
\mathbf{x} = \mathbf{f}_{\sf{c}} s_{\sf{c}} + \sum_{j=1}^{K}{\mathbf{f}_j s_j}\in \mathbb{C}^{{N_{\sf{t}}}\times{1}},
\end{align}
where $s_{\sf{c}}$ and $s_k$ follow an i.i.d. Gaussian distribution such that $s_{\sf{c}}, s_k  \sim  \mathcal{CN}{(0,P)}$. The transmit power constraint is presented as $\Vert \mathbf{f}_{\sf{c}} \Vert^{2} + \sum_{j=1}^{K} \Vert \mathbf{f}_{j} \Vert^{2}\leq 1$, whereby the total transmit power is constrained by $P$.
Then, the received signal at the $k$-th user is represented as
\begin{align}
    \label{RSEq} 
    y_{k} = \mathbf{h}_{k}^{\sf{H}}\mathbf{x} + n_{k}  = 
    \left(g_k \mathbf{a}_{k}\right)^{\sf{H}}\bigg(\mathbf{f}_{\sf{c}} s_{\sf{c}} + \sum_{j=1}^{K}{\mathbf{f}_j s_j}\bigg) + n_{k},
\end{align}
where $n_k$ is the additive white Gaussian noise (AWGN) that follows i.i.d. such that $n_k\sim\mathcal{CN}{(0,\sigma_n^2)}$, $\forall k \in \mathcal{K}$. At the receiver, the common stream $s_{\sf{c}}$ is first decoded by treating private streams as noise. After the successful decoding, the data is re-encoded to subtract it from $y_k$ via SIC operation. After the SIC, the corresponding private stream is decoded while treating the other private streams as noise. 

{To design a robust RSMA precoder for NOUM transmission under imperfect CSIT, we characterize the ergodic rate function of instantaneous common and private rates at the $k$-th user as $R_{{\sf{c}}, k}=\mathbb{E} [R_{{\sf{c}},k}^{\sf{ins}}]$ and $R_{{\sf{p}}, k}=\mathbb{E} [R_{{\sf{p}},k}^{\sf{ins}}]$, respectively.} However, since the ergodic rate is usually challenging to handle due to the absence of a general closed form, we derive an upper bound of the ergodic rate as follows: \footnote{The upper bound in (\ref{RCEq1}) becomes tight as the Rician K-factor increases, which will be verified in the numerical studies.}
\begin{align}
    \label{RCEq1}
    \!\!\! R_{{\sf{c}},k} & = \mathbb{E} [R_{{\sf{c}},k}^{\sf{ins}}] = \mathbb{E} \Bigg[\log_{2} 
    \Bigg(1 + \frac{\vert\mathbf{h}_{k}^{\sf{H}} \mathbf{f}_{\sf{c}}\vert^{2}}{\sum_{j=1}^{K} \vert \mathbf{h}_{k}^{\sf{H}} \mathbf{f}_{j}\vert^{2} + \frac{\sigma_{n}^2}{P}} \Bigg) \Bigg]
    \nonumber \\
    & =
    \mathbb{E} \Bigg[ \log_{2} 
    \Bigg(1 + \frac{\vert g_k \vert^{2} \vert\mathbf{a}_{k}^{\sf{H}} \mathbf{f}_{\sf{c}}\vert^{2} }{\sum_{j=1}^{K} \vert g_k \vert^{2} \vert \mathbf{a}_{k}^{\sf{H}} \mathbf{f}_{j}\vert^{2} + \frac{\sigma_{n}^2}{P}} \Bigg) \Bigg]
    \nonumber \\
    & \overset{(b)}{\leq}
    \log_{2} 
    \Bigg(1 + \frac{\mathbb{E}\left[ \vert g_k \vert^{2} \right] \vert\mathbf{a}_{k}^{\sf{H}} \mathbf{f}_{\sf{c}}\vert^{2} }{\sum_{j=1}^{K} \mathbb{E}\left[\vert g_k \vert^{2} \right]\vert \mathbf{a}_{k}^{\sf{H}} \mathbf{f}_{j}\vert^{2} + \frac{\sigma_{n}^2}{P}} \Bigg)
    \nonumber \\
    & \overset{(c)}{=}
    \log_{2} 
    \Bigg(1 + \frac{ \gamma_{k} \vert\mathbf{a}_{k}^{\sf{H}} \mathbf{f}_{\sf{c}}\vert^{2} }{\sum_{j=1}^{K} \gamma_{k}
    \vert \mathbf{a}_{k}^{\sf{H}} \mathbf{f}_{j}\vert^{2} + \frac{\sigma_{n}^2}{P}} \Bigg) \! = \! \bar{R}_{{\sf{c}},k},
\end{align}
where the step $(b)$ follows from the Jensen's inequality, i.e., $\log_{2}(1+\frac{a\mathbb{E}[X]}{b\mathbb{E}[X]+c}) \geq \mathbb{E}[\log_{2}(1+\frac{aX}{bX+c})]$, due to the concavity of $\log_{2}(1+\frac{aX}{bX+c})$ in terms of $X \geq 0$ for given non-negative numbers of $a$, $b$, and $c$. Step $(c)$ follows from the statistical information of the complex channel gain that is assumed to be known at the LEO satellite. 
{In order to ensure that all users are capable of decoding the common stream, we define a common rate as $\min_{k \in \mathcal{K}} \bar{R}_{{\sf c},k} \triangleq \sum_{j=1}^{K}C_{j} + C_{\sf{mc}}$ in which $C_{k}$ and $C_{\sf{mc}}$ are portions of the common rate allocated to the $k$-th user's unicast message and the multicast message.}

Similarly, the upper bound of $R_{{\sf{p}}, k}$ is derived as follows:
\begin{align}
    \label{RPEq1}
    \!\!\! R_{{\sf{p}},k} & = \mathbb{E} [R_{{\sf{p}},k}^{\sf{ins}}] = \mathbb{E} \Bigg[ \log_{2} 
    \Bigg(1 + \frac{\vert\mathbf{h}_{k}^{\sf{H}} \mathbf{f}_{k}\vert^{2}}{\sum_{j=1, j \neq k}^{K} \vert \mathbf{h}_{k}^{\sf{H}} \mathbf{f}_{j}\vert^{2} + \frac{\sigma_{n}^2}{P}} \Bigg) \Bigg]
    \nonumber \\
    & = 
    \mathbb{E} \Bigg[ \log_{2} 
    \Bigg(1 + \frac{ \vert g_k \vert^{2}  \vert\mathbf{a}_{k}^{\sf{H}} \mathbf{f}_{k}\vert^{2} }{\sum_{j=1, j \neq k}^{K} \vert g_k \vert^{2} \vert \mathbf{a}_{k}^{\sf{H}} \mathbf{f}_{j}\vert^{2} + \frac{\sigma_{n}^2}{P}} \Bigg) \Bigg]
    \nonumber \\
    & \leq
    \log_{2} 
    \Bigg(1 + \frac{\mathbb{E}\left[ \vert g_k \vert^{2} \right] \vert\mathbf{a}_{k}^{\sf{H}} \mathbf{f}_{k}\vert^{2} }{\sum_{j=1, j \neq k}^{K} \mathbb{E}\left[\vert g_k \vert^{2} \right]\vert \mathbf{a}_{k}^{\sf{H}} \mathbf{f}_{j}\vert^{2} + \frac{\sigma_{n}^2}{P}} \Bigg)
    \nonumber \\
    & =
    \log_{2} 
    \Bigg(1 + \frac{ \gamma_{k} \vert\mathbf{a}_{k}^{\sf{H}} \mathbf{f}_{k}\vert^{2} }{\sum_{j=1, j \neq k}^{K} \gamma_{k}
    \vert \mathbf{a}_{k}^{\sf{H}} \mathbf{f}_{j}\vert^{2} + \frac{\sigma_{n}^2}{P}} \Bigg) \! = \! \bar{R}_{{\sf{p}},k}.
\end{align}

\subsection{{Problem Formulation}}
The main objective of this work is to match the offered rates to the non-uniform traffic demands for an efficient superimposed unicast and multicast transmission at the LEO satellite, {which typically has} a limited power budget.
{To this end, we formulate an optimization problem to jointly find optimal precoding vectors $\mathbf{f}_{\sf{c}}, \mathbf{f}_{1}, \cdots, \mathbf{f}_{K}$ and {common rate portions $C_{1}, \cdots, C_{K}, C_{\sf{mc}}$} that minimize the {disparity} between traffic demands and offered rates as follows:}
\begin{align}
\nonumber
\mathscr{P}_1: \,\,\,\, 
\minimize_{\mathbf{f}_{\sf{c}}, \mathbf{f}_{1}, \cdots, \mathbf{f}_{K}, \mathbf{c}, C_{\sf{mc}}} \,\, 
&\sum_{j=1}^{K} \vert R_{{\sf target}{,j}}-(C_{j} + \bar{R}_{{\sf{p}}, j}) \vert ^{2} \\ 
& + \eta_{\sf{mc}} \vert R_{\sf{target, mc}}-C_{\sf{mc}} \vert ^{2}
\nonumber
\end{align}
\setcounter{equation}{19}
\begin{subequations}\label{condition1}
\begin{align}
\mathrm{s.t.}\,\,\,\,\,\,
\label{PF1CST1}
& \min_{k \in \mathcal{K}} \bar{R}_{{\sf c},k} \triangleq \sum_{j=1}^{K}C_{j} + C_{\sf{mc}}, \\
\label{PF1CST2}
&C_k \geq 0, \,\, \forall k \in \mathcal{K}, \,\, C_{\sf{mc}} \geq 0, \\
\label{PF1CST3}
& \Vert \mathbf{f}_{\sf{c}} \Vert^{2} + \sum\limits_{j=1}^{K} \Vert \mathbf{f}_{j} \Vert^{2}\leq 1,
\end{align}
\end{subequations}
where $R_{{\sf target}{,k}}$ and $R_{\sf target, mc}$ denote the traffic demands of the unicast message for the $k$-th user and the multicast message, respectively. $\mathbf{c}=[C_1,\cdots, C_K]^{\sf{T}}$ indicates a vector consisting of common rate portions associated with unicast messages, and $\eta_{\sf{mc}}$ indicates a regularization parameter.
{The constraint (\ref{PF1CST1}) enables $s_{\sf{c}}$ to be decodable by all users, and the constraint (\ref{PF1CST2}) ensures the common rate portions are non-negative values.} 
The transmit power is constrained by (\ref{PF1CST3}).

\begin{remark} 
{\rm \textbf{(Difference from sum-rate maximization problem with QoS constraints \cite{mao2019rate, li2023non})}:} \rm{In the sum rate maximization problem with QoS constraints for unicast and multicast rates, a feasible solution is not always guaranteed. In other words, satisfying demands with QoS constraints can lead to an infeasible solution if the usable power budget is insufficient or the channel condition is unfavorable, thereby reducing communication reliability in LEO satellite networks with power-hungry payloads. 
In contrast, a feasible solution is always guaranteed in the proposed rate-matching framework regardless of the usable power budget and channel condition.}
\end{remark}


\section{Proposed RSMA-Based Rate-Matching Framework for NOUM Transmission} 
The formulated problem $\mathscr{P}_{1}$ is challenging to solve due to {its} non-convexity and non-smoothness properties. To tackle this, we convert it into a more tractable form by transforming the constrained problem into an unconstrained problem in which the objective function is differentiable. {We then derive the first-order KKT condition of the reformulated problem and show that the first-order KKT condition is cast as an NEPv. Subsequently, we introduce the GPI-RS-NOUM algorithm, based on the principle of the conventional power iteration, to jointly find an optimal rate-matching precoding vector (including power allocation) and common rate portion.}

\subsection{{Reformulate the Problem to a Tractable Form}}

We first reformulate the rate expressions (\ref{RCEq1}) and (\ref{RPEq1}) as 
\begin{align}
    \label{RCEq2}
    \bar{R}_{{\sf{c}},k} & = \log_{2} 
    \Bigg(\frac{\sum_{j \in \mathcal{L}} \gamma_{k} \vert\mathbf{a}_{k}^{\sf{H}} \mathbf{f}_{j}\vert^{2} + \frac{\sigma_{n}^2}{P}} {\sum_{j=1}^{K} \gamma_{k} \vert\mathbf{a}_{k}^{\sf{H}} \mathbf{f}_{j}\vert^{2} + \frac{\sigma_{n}^2}{P} } \Bigg) \nonumber \\ 
    & = \log_{2} 
    \Bigg(\frac{\sum_{j \in \mathcal{L}}  \mathbf{f}_{j}^{\sf{H}} (\gamma_{k} \mathbf{a}_{k} \mathbf{a}_{k}^{\sf{H}}) \mathbf{f}_{j}  + \frac{\sigma_{n}^2}{P}}{\sum_{j=1}^{K} \mathbf{f}_{j}^{\sf{H}} (\gamma_{k} \mathbf{a}_{k} \mathbf{a}_{k}^{\sf{H}}) \mathbf{f}_{j} + \frac{\sigma_{n}^2}{P} }\Bigg),
\end{align}
and
\begin{align}
    \label{RPEq2}
    \bar{R}_{{\sf{p}},k} & = \log_{2} 
    \Bigg(\frac{\sum_{j=1}^{K} \gamma_{k} \vert\mathbf{a}_{k}^{\sf{H}} \mathbf{f}_{j}\vert^{2} + \frac{\sigma_{n}^2}{P}}{\sum_{j=1}^{K} \gamma_{k} \vert\mathbf{a}_{k}^{\sf{H}} \mathbf{f}_{j}\vert^{2} + \frac{\sigma_{n}^2}{P} }\Bigg) \nonumber \\
    & = \log_{2} 
    \Bigg(\frac{\sum_{j=1}^{K}  \mathbf{f}_{j}^{\sf{H}} (\gamma_{k} \mathbf{a}_{k} \mathbf{a}_{k}^{\sf{H}}) \mathbf{f}_{j} + \frac{\sigma_{n}^2}{P}}{\sum_{j=1, j \neq k}^{K} \mathbf{f}_{j}^{\sf{H}} (\gamma_{k} \mathbf{a}_{k} \mathbf{a}_{k}^{\sf{H}}) \mathbf{f}_{j} + \frac{\sigma_{n}^2}{P}}\Bigg),
\end{align}
respectively in which $\mathcal{L} \triangleq \{ {\sf{c}}, 1, \cdots, K \}$.
Subsequently, by stacking the precoding vectors as $\bar{\mathbf{f}} = [\mathbf{f}_{\sf{c}}^{\sf{T}}, \mathbf{f}_{1}^{\sf{T}}, \cdots, \mathbf{f}_{K}^{\sf{T}}]^{\sf{T}} \in \mathbb{C}^{N_{\sf{t}}(K+1) \times 1}$, the numerator term of (\ref{RCEq2}) is reformulated as
\begin{align}
\sum\limits_{j \in \mathcal{L}} \mathbf{f}_{j}^{\sf{H}} (\gamma_{k} 
 \mathbf{a}_{k} \mathbf{a}_{k}^{\sf{H}}) \mathbf{f}_{j}  + \frac{\sigma_{n}^2}{P} = \bar{\mathbf{f}}^{\sf{H}}\mathbf{A}_{k}^{\sf{c}}\bar{\mathbf{f}},
\end{align}
where
\begin{align} \label{Ack}
   \mathbf{A}_{k}^{\sf{c}} = {\sf{blkdiag}}\left(\gamma_{k} \mathbf{a}_{k} \mathbf{a}_{k}^{\sf{H}}, \cdots, \gamma_{k} \mathbf{a}_{k} \mathbf{a}_{k}^{\sf{H}}\right)
    + \frac{\sigma_{n}^2}{P} \mathbf{I},
\end{align}
whose size is ${N_{\sf{t}}(K+1) \times N_{\sf{t}}(K+1)}$. We here assume 
 $\Vert \bar{\mathbf{f}} \Vert^{2} = 1$.
With the similar manner, the denominator term of (\ref{RCEq2}) is rewritten as
\begin{align}
\sum\limits_{j=1}^{K} \mathbf{f}_{j}^{\sf{H}} (\gamma_{k} \mathbf{a}_{k} \mathbf{a}_{k}^{\sf{H}}) \mathbf{f}_{j}  + \frac{\sigma_{n}^2}{P} = \bar{\mathbf{f}}^{\sf{H}}\mathbf{B}_{k}^{\sf{c}}\bar{\mathbf{f}},
\end{align}
where
\begin{align} \label{Bck}
   \mathbf{B}_{k}^{\sf{c}} = \mathbf{A}_{k}^{\sf{c}} - {\sf{blkdiag}}\left(\gamma_{k} \mathbf{a}_{k} \mathbf{a}_{k}^{\sf{H}}, 0, \cdots, 0 \right),
\end{align}
whose size is ${N_{\sf{t}}(K+1) \times N_{\sf{t}}(K+1)}$.
The numerator term of (\ref{RPEq2}) is reformulated as
\begin{align}
\sum\limits_{j=1}^{K} \mathbf{f}_{j}^{\sf{H}} (\gamma_{k} \mathbf{a}_{k} \mathbf{a}_{k}^{\sf{H}}) \mathbf{f}_{j}  + \frac{\sigma_{n}^2}{P} = \bar{\mathbf{f}}^{\sf{H}}\mathbf{A}_{k}^{\sf{p}}\bar{\mathbf{f}},
\end{align}
where
\begin{align} \label{Apk}
   \!\!\! \mathbf{A}_{k}^{\sf{p}} \! = \!
   {\sf{blkdiag}}\left(0, \gamma_{k} \mathbf{a}_{k} \mathbf{a}_{k}^{\sf{H}}, \cdots, \gamma_{k} \mathbf{a}_{k} \mathbf{a}_{k}^{\sf{H}} \right)
    + \frac{\sigma_{n}^2}{P} \mathbf{I},
\end{align}
whose size is ${N_{\sf{t}}(K+1) \times N_{\sf{t}}(K+1)}$.
Also, the denominator term of (\ref{RPEq2}) can be given by
\begin{align}
\sum\limits_{j=1, j \neq k}^{K} \mathbf{f}_{j}^{\sf{H}} (\gamma_{k} \mathbf{a}_{k} \mathbf{a}_{k}^{\sf{H}}) \mathbf{f}_{j}  + \frac{\sigma_{n}^2}{P} = \bar{\mathbf{f}}^{\sf{H}}\mathbf{B}_{k}^{\sf{p}}\bar{\mathbf{f}},
\end{align}
where
\begin{align} \label{Bpk}
   \!\!\!\mathbf{B}_{k}^{\sf{p}} = \mathbf{A}_{k}^{\sf{p}} - {\sf{blkdiag}}\bigg(0, \cdots, 0, \! \underbrace{\gamma_{k} \mathbf{a}_{k} \mathbf{a}_{k}^{\sf{H}}}_{(k+1)\textrm{th block}}, 0, \cdots, 0 \bigg),
\end{align}
whose size is ${N_{\sf{t}}(K+1) \times N_{\sf{t}}(K+1)}$.
By doing so, the equations (\ref{RCEq2}) and (\ref{RPEq2}) can be reformulated as
\begin{align}
    \label{REq}
    \bar{R}_{{\sf{c}},k} = \log_{2} 
\left(\frac{\bar{\mathbf{f}}^{\sf{H}}\mathbf{A}_{k}^{\sf{c}}\bar{\mathbf{f}}}{\bar{\mathbf{f}}^{\sf{H}}\mathbf{B}_{k}^{\sf{c}}\bar{\mathbf{f}}}\right), \,\, \bar{R}_{{\sf{p}},k} = \log_{2} 
\left(\frac{\bar{\mathbf{f}}^{\sf{H}}\mathbf{A}_{k}^{\sf{p}}\bar{\mathbf{f}}}{\bar{\mathbf{f}}^{\sf{H}}\mathbf{B}_{k}^{\sf{p}}\bar{\mathbf{f}}}\right),
\end{align}
respectively. Therefore, $\mathscr{P}_1$ can be reformulated as follows:
\begin{align}
\nonumber
\mathscr{P}_2: \,\,\,\, 
\minimize_{\bar{\mathbf{f}}, \mathbf{c}, C_{\sf{mc}}} \,\, 
& \sum_{j=1}^{K} \left\vert R_{{\sf target}{,j}}-\left(C_{j} + \log_{2}     \left(\frac{\bar{\mathbf{f}}^{\sf{H}}\mathbf{A}_{j}^{\sf{p}}\bar{\mathbf{f}}}{\bar{\mathbf{f}}^{\sf{H}}\mathbf{B}_{j}^{\sf{p}}\bar{\mathbf{f}}}\right)\right) \right\vert^{2} \\ 
& + \eta_{\sf{mc}} \vert R_{\sf{target, mc}}-C_{\sf{mc}} \vert ^{2}
\nonumber
\end{align}
\setcounter{equation}{31}
\begin{subequations}\label{condition2}
\begin{align}
\mathrm{s.t.}\,\,\,\,\,\,
\label{PF2CST1}
& \min_{k \in \mathcal{K}} \,\, \log_{2} \left(\frac{\bar{\mathbf{f}}^{\sf{H}}\mathbf{A}_{k}^{\sf{c}}\bar{\mathbf{f}}}{\bar{\mathbf{f}}^{\sf{H}}\mathbf{B}_{k}^{\sf{c}}\bar{\mathbf{f}}}\right)  \triangleq \sum_{j=1}^{K}C_{j} + C_{\sf{mc}}, \\
\label{PF2CST2}
&C_k \geq 0, \,\, \forall k \in \mathcal{K}, \,\, C_{\sf{mc}} \geq 0,  \\
\label{PF2CST3}
& \Vert \bar{\mathbf{f}} \Vert^{2} = 1.
\end{align}
\end{subequations}
Note that the transmit power does not affect the value of rate equations as it can always be normalized. In other words, rate expressions are {scale invariant} with the transmit power. Thus, the constraint (\ref{PF2CST3}) is vanished in $\mathscr{P}_{2}$, reformulating it as
\begin{align}
\nonumber
\mathscr{P}_3: \,\,\,\, 
\minimize_{\bar{\mathbf{f}}, \mathbf{c}, C_{\sf{mc}}} \,\, 
& \sum_{j=1}^{K} \left\vert R_{{\sf target}{,j}}-\left(C_{j} + \log_{2}    \left(\frac{\bar{\mathbf{f}}^{\sf{H}}\mathbf{A}_{j}^{\sf{p}}\bar{\mathbf{f}}}{\bar{\mathbf{f}}^{\sf{H}}\mathbf{B}_{j}^{\sf{p}}\bar{\mathbf{f}}}\right)\right) \right\vert^{2} \\ 
& + \eta_{\sf{mc}} \vert R_{\sf{target, mc}}-C_{\sf{mc}} \vert ^{2}
\nonumber
\end{align}
\setcounter{equation}{32}
\begin{subequations}\label{condition3}
\begin{align}
\mathrm{s.t.}\,\,\,\,\,\,
\label{PF3CST1}
& \min_{k \in \mathcal{K}} \,\, \log_{2} \left(\frac{\bar{\mathbf{f}}^{\sf{H}}\mathbf{A}_{k}^{\sf{c}}\bar{\mathbf{f}}}{\bar{\mathbf{f}}^{\sf{H}}\mathbf{B}_{k}^{\sf{c}}\bar{\mathbf{f}}}\right)  \triangleq \sum_{j=1}^{K}C_{j} + C_{\sf{mc}}, \\
\label{PF3CST2}
&C_k \geq 0, \,\, \forall k \in \mathcal{K}, \,\, C_{\sf{mc}} \geq 0.
\end{align}
\end{subequations}
Subsequently, to make the non-smooth minimum function differentiable, we employ the LogSumExp technique \cite{shen2010dual} as
\begin{align}
\label{logsumexp}
 & \!\!\!\!\! \min_{k \in \mathcal{K}} \bar{R}_{{\sf{c}}, k} \approx
 \log\left( \frac{1}{K} \sum_{i=1}^{K}
 \exp\left\{-\frac{1}{\alpha}
\log_{2}\left(\frac{\bar{\mathbf{f}}^{\sf{H}}\mathbf{A}_{i}^{\sf{c}}\bar{\mathbf{f}}}{\bar{\mathbf{f}}^{\sf{H}}\mathbf{B}_{i}^{\sf{c}}\bar{\mathbf{f}}}\right)
 \right\}\right)^{\!\! -\alpha} \!\!\!\!\!\!,
\end{align}
where the approximation becomes tight as $\alpha \rightarrow + 0$.
With the approximated equation, the problem $\mathscr{P}_3$ can be rewritten as 
\begin{align}
\nonumber
\mathscr{P}_4: \,\,\,\, 
\minimize_{\bar{\mathbf{f}}, \mathbf{c}, C_{\sf{mc}}} \,\, 
& \sum_{j=1}^{K} \left\vert R_{{\sf target}{,j}}-\left(C_{j} + \log_{2}    \left(\frac{\bar{\mathbf{f}}^{\sf{H}}\mathbf{A}_{j}^{\sf{p}}\bar{\mathbf{f}}}{\bar{\mathbf{f}}^{\sf{H}}\mathbf{B}_{j}^{\sf{p}}\bar{\mathbf{f}}}\right)\right) \right\vert^{2} \\ 
& + \eta_{\sf{mc}} \vert R_{\sf{target, mc}}-C_{\sf{mc}} \vert ^{2}
\nonumber
\end{align}
\setcounter{equation}{34}
\begin{subequations}\label{condition4}
\begin{align}
\mathrm{s.t.}\,
\label{PF4CST1}
& \log\left( \frac{1}{K} \sum_{i=1}^{K}\exp
\left\{
\log_{2}\left(\frac{\bar{\mathbf{f}}^{\sf{H}}\mathbf{A}_{i}^{\sf{c}}\bar{\mathbf{f}}}{\bar{\mathbf{f}}^{\sf{H}}\mathbf{B}_{i}^{\sf{c}}\bar{\mathbf{f}}}\right)^{\!\!\! -\frac{1}{\alpha}}\right\}\right)^{\!\!\! -\alpha} 
\!\!\!\!\! = \! \sum_{j=1}^{K}C_{j} + C_{\sf{mc}}, \\
\label{PF4CST2}
& \, C_k \geq 0, \,\, \forall k \in \mathcal{K}, \,\, C_{\sf{mc}} \geq 0.
\end{align}
\end{subequations}

Nevertheless, addressing the reformed problem $\mathscr{P}_{4}$ remains a formidable task due to the non-convex constraint (\ref{PF4CST1}) and multiple constraints associated with common rates in (\ref{PF4CST2}).
{To resolve this issue, we rewrite the common rate portion of the $k$-th user $C_{k}$ into the ratio of the equation (\ref{logsumexp}) as}
\begin{align}
    & C_{k}(\bar{\mathbf{f}}, \mathbf{v}) = 
    \nonumber \\
    &  
\frac{\mathbf{v}^{\sf{H}}\mathbf{E}_{k}\mathbf{v}}{\mathbf{v}^{\sf{H}}\mathbf{v}}\log\left( \frac{1}{K} \sum_{i=1}^{K}\exp
\left\{-\frac{1}{\alpha}
\log_{2}\left(\frac{\bar{\mathbf{f}}^{\sf{H}}\mathbf{A}_{i}^{\sf{c}}\bar{\mathbf{f}}}{\bar{\mathbf{f}}^{\sf{H}}\mathbf{B}_{i}^{\sf{c}}\bar{\mathbf{f}}}\right)\right\}\right)^{ \!\! -\alpha} \! ,
\end{align}
where $\mathbf{v}\in\mathbb{C}^{(K+1)\times 1}$ and $\mathbf{E}_{k}\in\mathbb{C}^{(K+1)\times (K+1)}$ indicate $\mathbf{v}=[v_{1},\cdots,v_{K},v_{K+1}]^{\sf{T}}$ and the diagonal matrix in which the $(k, k)$-th diagonal element is set to be $1$ and otherwise $0$, {respectively}. {Then}, $C_{\sf{mc}}$ can be rewritten as
\begin{align}
    & \!\!\! C_{\sf{mc}}(\bar{\mathbf{f}}, \mathbf{v}) = 
    \nonumber \\
    & \!\!\! \frac{\mathbf{v}^{\sf{H}}\mathbf{E}_{K+1}\mathbf{v}}{\mathbf{v}^{\sf{H}}\mathbf{v}}\log 
    \left( \!\frac{1}{K} \sum_{i=1}^{K}\exp\left\{-\frac{1}{\alpha}
\log_{2}\left(\frac{\bar{\mathbf{f}}^{\sf{H}}\mathbf{A}_{i}^{\sf{c}}\bar{\mathbf{f}}}{\bar{\mathbf{f}}^{\sf{H}}\mathbf{B}_{i}^{\sf{c}}\bar{\mathbf{f}}}\right)\right\}\! \right)^{ \!\! -\alpha}
\end{align}
into the fractional form. 
It is noted that the value of $\frac{\mathbf{v}^{\sf{H}}\mathbf{E}_{k}\mathbf{v}}{\mathbf{v}^{\sf{H}}\mathbf{v}} = \frac{v_{k}^2}{\sum_{j=1}^{K+1}v_{j}^2}$, $\forall$ $k \in \{1,\cdots,K,K+1 \}$, is always between $0$ and $1$.
In addition to this, the summation for the ratio term of each common stream is $1$, i.e.,
$\sum_{j=1}^{K+1}{\frac{\mathbf{v}^{\sf{H}}\mathbf{E}_{j}\mathbf{v}}{\mathbf{v}^{\sf{H}}\mathbf{v}}} = 1.$
Thanks to these features, the constraints (\ref{PF4CST1}) and (\ref{PF4CST2}) can be vanished from $\mathscr{P}_{4}$. Therefore, $\mathscr{P}_{4}$ can be reformulated into the unconstrained problem as follows:
\begin{align}
\label{P5}
\mathscr{P}_5 : 
\minimize_{\bar{\mathbf{f}}, \mathbf{v}} 
f(\bar{\mathbf{f}}, \mathbf{v}),
\end{align}
where 
\begin{align}
    f(\bar{\mathbf{f}}, \mathbf{v}) & = \sum_{j=1}^{K} \left\vert R_{{\sf target}{,j}} \! - \! \left(\log_{2}  \left(\frac{\bar{\mathbf{f}}^{\sf{H}}\mathbf{A}_{j}^{\sf{p}}\bar{\mathbf{f}}}{\bar{\mathbf{f}}^{\sf{H}}\mathbf{B}_{j}^{\sf{p}}\bar{\mathbf{f}}}\right) + C_{j}(\bar{\mathbf{f}}, \mathbf{v}) \right)\right\vert^{2} 
\nonumber \\
&+ \eta_{\sf{mc}} \left\vert R_{\sf{target, mc}} - C_{\sf{mc}}(\bar{\mathbf{f}}, \mathbf{v})\right\vert ^{2}. 
\end{align}

\subsection{{Optimal Precoder Design via GPI-RS-NOUM Algorithm}}

{Although $\mathscr{P}_{5}$ is a unconstrained problem, obtaining optimal $\bar{\mathbf{f}}$ and $\mathbf{v}$ is still challenging due to the non-convexity of $f(\bar{\mathbf{f}}, \mathbf{v})$.}
{To resolve this, we derive the first-order KKT condition of $\mathscr{P}_{5}$ to be formulated as an NEPv \cite{cai2018eigenvector}.}
{The GPI-RS-NOUM algorithm built upon the process of conventional power iteration is introduced to identify the leading eigenvector that maximizes the eigenvalue of the presented NEPv.}
These procedures are explained step-by-step to improve readability. 


\subsubsection*{{\bf Step I) First-order KKT condition w.r.t. $\bar{\mathbf{f}}$}\rm} 

{We first derive the first-order KKT condition of the problem (\ref{P5}) with respect to $\bar{\mathbf{f}}$, i.e., $\frac{\partial{f(\bar{\mathbf{f}}, \mathbf{v})}} {\partial{\bar{\mathbf{f}}}} = \mathbf{0}$.} The following Lemma \ref{lemma1} indicates the first-order KKT condition to $\bar{\mathbf{f}}$. 
\begin{lemma}
\label{lemma1}
The first-order KKT condition of (\ref{P5}) with respect to $\bar{\mathbf{f}}$ holds when the following equation is satisfied.
\begin{align}
\label{KKT_4_f_final}
     {\mathbf{B}(\bar{\mathbf{f}}, \mathbf{v})}^{-1}  
     {\mathbf{A}(\bar{\mathbf{f}}, \mathbf{v})}\bar{\mathbf{f}} = \lambda(\bar{\mathbf{f}}, \mathbf{v})\bar{\mathbf{f}}.
\end{align}
{The matrices ${\mathbf{A}(\bar{\mathbf{f}}, \mathbf{v})}$ and ${\mathbf{B}(\bar{\mathbf{f}}, \mathbf{v})}$ $\in \mathbb{C}^{N_{\sf{t}}(K+1) \times N_{\sf{t}}(K+1)}$ in the equation (\ref{KKT_4_f_final}) are respectively expressed as {(\ref{A_kkt}) and (\ref{B_kkt})} at the top of this page with the following equations (\ref{common_part_A})$-$(\ref{lambda_den}).}
\begin{figure*}[!t]
\small
\begin{align}
    \label{A_kkt}
     & {\mathbf{A}(\bar{\mathbf{f}}, \mathbf{v})} =  
    \lambda_{\sf{num}}(\bar{\mathbf{f}}, \mathbf{v}) \! \times \! \Bigg\{
    \sum_{j=1}^{K}\Bigg[ R_{{\sf{target}}, j} \left(
    \frac{{\mathbf{A}_{j}^{\sf{p}}}}{\bar{\mathbf{f}}^{\sf{H}}
    {\mathbf{A}_{j}^{\sf{p}}}{\bar{\mathbf{f}}}}
    + \frac{\mathbf{v}^{\sf{H}}\mathbf{E}_{j}\mathbf{v}}{\mathbf{v}^{\sf{H}}\mathbf{v}}
    \mathbf{L}_{\sf{A}}(\bar{\mathbf{f}})\right)
    + 
    \log_{2} \left( \frac{\bar{\mathbf{f}}^{\sf{H}}\mathbf{A}_{j}^{\sf{p}}\bar{\mathbf{f}}}{\bar{\mathbf{f}}^{\sf{H}}\mathbf{B}_{j}^{\sf{p}}\bar{\mathbf{f}}} \right)
    \frac{{\mathbf{B}_{j}^{\sf{p}}}}{\bar{\mathbf{f}}^{\sf{H}}
    {\mathbf{B}_{j}^{\sf{p}}}{\bar{\mathbf{f}}}}
    +     \frac{\mathbf{v}^{\sf{H}}\mathbf{E}_{j}\mathbf{v}}{\mathbf{v}^{\sf{H}}\mathbf{v}}    
    C_{j}(\bar{\mathbf{f}}, \mathbf{v})
    \mathbf{L}_{\sf{B}}(\bar{\mathbf{f}})
    \nonumber \\
    & 
    + \frac{\mathbf{v}^{\sf{H}}\mathbf{E}_{j}\mathbf{v}}{\mathbf{v}^{\sf{H}}\mathbf{v}} 
    \log_{2} \left( \frac{\bar{\mathbf{f}}^{\sf{H}}\mathbf{A}_{j}^{\sf{p}}\bar{\mathbf{f}}}{\bar{\mathbf{f}}^{\sf{H}}\mathbf{B}_{j}^{\sf{p}}\bar{\mathbf{f}}} \right)
    \mathbf{L}_{\sf{B}}(\bar{\mathbf{f}})
    + C_{j}(\bar{\mathbf{f}}, \mathbf{v})
    \frac{{\mathbf{B}_{j}^{\sf{p}}}}{\bar{\mathbf{f}}^{\sf{H}}
    {\mathbf{B}_{j}^{\sf{p}}}{\bar{\mathbf{f}}}}
    \Bigg]
    + \eta_{\sf{mc}}  \left[ 
    R_{\sf{target, mc}} 
\frac{\mathbf{v}^{\sf{H}}\mathbf{E}_{K+1}\mathbf{v}}{\mathbf{v}^{\sf{H}}\mathbf{v}}
\mathbf{L}_{\sf{A}}(\bar{\mathbf{f}})
      + \frac{\mathbf{v}^{\sf{H}}\mathbf{E}_{K+1}\mathbf{v}}{\mathbf{v}^{\sf{H}}\mathbf{v}}
    C_{\sf{mc}}(\bar{\mathbf{f}}, \mathbf{v})
    \mathbf{L}_{\sf{B}}(\bar{\mathbf{f}}) \right] \Bigg\}.
\end{align}
\noindent\rule{\textwidth}{.5pt}
\end{figure*}
\begin{figure*}[!t]
\small
\begin{align}
    \label{B_kkt}
    & {\mathbf{B}(\bar{\mathbf{f}}, \mathbf{v})} =  
    \lambda_{\sf{den}}(\bar{\mathbf{f}}, \mathbf{v}) \! \times \! \Bigg\{
    \sum_{j=1}^{K}\Bigg[ R_{{\sf{target}}, j} \left(
    \frac{{\mathbf{B}_{j}^{\sf{p}}}}{\bar{\mathbf{f}}^{\sf{H}}
    {\mathbf{B}_{j}^{\sf{p}}}{\bar{\mathbf{f}}}}
    + \frac{\mathbf{v}^{\sf{H}}\mathbf{E}_{j}\mathbf{v}}{\mathbf{v}^{\sf{H}}\mathbf{v}}
    \mathbf{L}_{\sf{B}}(\bar{\mathbf{f}})\right)
    + 
    \log_{2} \left( \frac{\bar{\mathbf{f}}^{\sf{H}}\mathbf{A}_{j}^{\sf{p}}\bar{\mathbf{f}}}{\bar{\mathbf{f}}^{\sf{H}}\mathbf{B}_{j}^{\sf{p}}\bar{\mathbf{f}}} \right)
    \frac{{\mathbf{A}_{j}^{\sf{p}}}}{\bar{\mathbf{f}}^{\sf{H}}
    {\mathbf{A}_{j}^{\sf{p}}}{\bar{\mathbf{f}}}}
    + \frac{\mathbf{v}^{\sf{H}}\mathbf{E}_{j}\mathbf{v}}{\mathbf{v}^{\sf{H}}\mathbf{v}}    
    C_{j}(\bar{\mathbf{f}}, \mathbf{v})
    \mathbf{L}_{\sf{A}}(\bar{\mathbf{f}})
    \nonumber \\
    & 
    + \frac{\mathbf{v}^{\sf{H}}\mathbf{E}_{j}\mathbf{v}}{\mathbf{v}^{\sf{H}}\mathbf{v}} 
    \log_{2} \left( \frac{\bar{\mathbf{f}}^{\sf{H}}\mathbf{A}_{j}^{\sf{p}}\bar{\mathbf{f}}}{\bar{\mathbf{f}}^{\sf{H}}\mathbf{B}_{j}^{\sf{p}}\bar{\mathbf{f}}} \right)
    \mathbf{L}_{\sf{A}}(\bar{\mathbf{f}})
    + C_{j}(\bar{\mathbf{f}}, \mathbf{v})
    \frac{{\mathbf{A}_{j}^{\sf{p}}}}{\bar{\mathbf{f}}^{\sf{H}}
    {\mathbf{A}_{j}^{\sf{p}}}{\bar{\mathbf{f}}}}
    \Bigg]
     + \eta_{\sf{mc}}  \left[ 
    R_{\sf{target, mc}} 
\frac{\mathbf{v}^{\sf{H}}\mathbf{E}_{K+1}\mathbf{v}}{\mathbf{v}^{\sf{H}}\mathbf{v}}
\mathbf{L}_{\sf{B}}(\bar{\mathbf{f}})
      + \frac{\mathbf{v}^{\sf{H}}\mathbf{E}_{K+1}\mathbf{v}}{\mathbf{v}^{\sf{H}}\mathbf{v}}
    C_{\sf{mc}}(\bar{\mathbf{f}}, \mathbf{v})
    \mathbf{L}_{\sf{A}}(\bar{\mathbf{f}}) \right] \Bigg\} .
\end{align}
\noindent\rule{\textwidth}{.5pt}
\end{figure*}
\begin{align}
\label{common_part_A}
& \mathbf{L}_{\sf{A}}(\bar{\mathbf{f}}) \!=\! \sum_{i=1}^{K}\left({\frac{\exp\left\{-\frac{1}{\alpha} \log_{2} \left( \frac{\bar{\mathbf{f}}^{\sf{H}}\mathbf{A}_{i}^{\sf{c}}\bar{\mathbf{f}}}{\bar{\mathbf{f}}^{\sf{H}}\mathbf{B}_{i}^{\sf{c}}\bar{\mathbf{f}}} \right)
    \right\}}{\sum_{\ell=1}^{K}{\exp\left\{-\frac{1}{\alpha} \log_{2} \left( \frac{\bar{\mathbf{f}}^{\sf{H}}\mathbf{A}_{\ell}^{\sf{c}}\bar{\mathbf{f}}}{\bar{\mathbf{f}}^{\sf{H}}\mathbf{B}_{\ell}^{\sf{c}}\bar{\mathbf{f}}} \right)
    \right\}}}} \cdot
    \frac{{\mathbf{A}_{i}^{\sf{c}}}}{\bar{\mathbf{f}}^{\sf{H}}
    {\mathbf{A}_{i}^{\sf{c}}}{\bar{\mathbf{f}}}}
    \right),
\\
\label{common_part_B}
& \mathbf{L}_{\sf{B}}(\bar{\mathbf{f}}) \!=\! \sum_{i=1}^{K}\left({\frac{\exp\left\{-\frac{1}{\alpha} \log_{2} \left( \frac{\bar{\mathbf{f}}^{\sf{H}}\mathbf{A}_{i}^{\sf{c}}\bar{\mathbf{f}}}{\bar{\mathbf{f}}^{\sf{H}}\mathbf{B}_{i}^{\sf{c}}\bar{\mathbf{f}}} \right)
    \right\}}{\sum_{\ell=1}^{K}{\exp\left\{-\frac{1}{\alpha} \log_{2} \left( \frac{\bar{\mathbf{f}}^{\sf{H}}\mathbf{A}_{\ell}^{\sf{c}}\bar{\mathbf{f}}}{\bar{\mathbf{f}}^{\sf{H}}\mathbf{B}_{\ell}^{\sf{c}}\bar{\mathbf{f}}} \right)
    \right\}}}} \cdot
    \frac{{\mathbf{B}_{i}^{\sf{c}}}}{\bar{\mathbf{f}}^{\sf{H}}
    {\mathbf{B}_{i}^{\sf{c}}}{\bar{\mathbf{f}}}}
    \right),
\\
\label{lambda}
&\lambda(\bar{\mathbf{f}}, \mathbf{v}) = 
    \exp\left(-\eta_{\sf{mc}}\left\vert R_{{\sf{target, mc}}} - C_{{\sf{mc}}}(\bar{\mathbf{f}}, \mathbf{v}) \right\vert^{2}\right) \times 
\nonumber \\
& \prod_{j=1}^{K}\exp\left(-\left\vert R_{{\sf{target}},j} - \left(\log_{2}  \left(\frac{\bar{\mathbf{f}}^{\sf{H}}\mathbf{A}_{j}^{\sf{p}}\bar{\mathbf{f}}}{\bar{\mathbf{f}}^{\sf{H}}\mathbf{B}_{j}^{\sf{p}}\bar{\mathbf{f}}}\right) + C_{j}(\bar{\mathbf{f}}, \mathbf{v}) \right)\right\vert^{2}\right), 
\\
\label{lambda_num}
&\lambda_{\sf{num}}(\bar{\mathbf{f}}, \mathbf{v}) = 
\nonumber \\
& \prod_{j=1}^{K}\exp\left(-\left\vert R_{{\sf{target}},j} - \left(\log_{2} \left(\frac{\bar{\mathbf{f}}^{\sf{H}}\mathbf{A}_{j}^{\sf{p}}\bar{\mathbf{f}}}{\bar{\mathbf{f}}^{\sf{H}}\mathbf{B}_{j}^{\sf{p}}\bar{\mathbf{f}}}\right) + C_{j}(\bar{\mathbf{f}}, \mathbf{v}) \right)\right\vert^{2}\right),
\end{align}
and
\begin{align}
    \label{lambda_den}
\lambda_{\sf{den}}(\bar{\mathbf{f}}, \mathbf{v}) = 
    \exp\left(\eta_{\sf{mc}}\left\vert R_{{\sf{target, mc}}} - C_{{\sf{mc}}}(\bar{\mathbf{f}}, \mathbf{v}) \right\vert^{2}\right),
\end{align}
where $\mathbf{L}_{\sf{A}}(\bar{\mathbf{f}})$ and $\mathbf{L}_{\sf{B}}(\bar{\mathbf{f}})$ $\in \mathbb{C}^{N_{\sf{t}}(K+1) \times N_{\sf{t}}(K+1)}$.
\end{lemma}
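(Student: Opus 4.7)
The plan is to compute the Wirtinger gradient $\nabla_{\bar{\mathbf{f}}^*} f$, set it to $\mathbf{0}$, and rearrange the stationarity identity into the generalized-eigenvector form (\ref{KKT_4_f_final}). Two elementary pieces suffice. First, for any Hermitian $\mathbf{M}$ one has $\nabla_{\bar{\mathbf{f}}^*}\log_2(\bar{\mathbf{f}}^{\sf{H}}\mathbf{M}\bar{\mathbf{f}}) = \mathbf{M}\bar{\mathbf{f}}/\{(\ln 2)(\bar{\mathbf{f}}^{\sf{H}}\mathbf{M}\bar{\mathbf{f}})\}$; applying this to $g_j(\bar{\mathbf{f}}) := \log_2(\bar{\mathbf{f}}^{\sf{H}}\mathbf{A}_j^{\sf{p}}\bar{\mathbf{f}}/\bar{\mathbf{f}}^{\sf{H}}\mathbf{B}_j^{\sf{p}}\bar{\mathbf{f}})$ yields a difference of the two normalized blocks $\mathbf{A}_j^{\sf{p}}/(\bar{\mathbf{f}}^{\sf{H}}\mathbf{A}_j^{\sf{p}}\bar{\mathbf{f}})$ and $\mathbf{B}_j^{\sf{p}}/(\bar{\mathbf{f}}^{\sf{H}}\mathbf{B}_j^{\sf{p}}\bar{\mathbf{f}})$. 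Second, a chain-rule application to the LogSumExp envelope $\Phi(\bar{\mathbf{f}})$ common to every $C_k$ and $C_{\sf{mc}}$ turns the exponential weights into the softmax coefficients that appear in (\ref{common_part_A})--(\ref{common_part_B}), giving $\nabla_{\bar{\mathbf{f}}^*}\Phi = (\mathbf{L}_{\sf{A}}(\bar{\mathbf{f}}) - \mathbf{L}_{\sf{B}}(\bar{\mathbf{f}}))\bar{\mathbf{f}}/\ln 2$, and hence $\nabla_{\bar{\mathbf{f}}^*}C_k = \beta_k \nabla_{\bar{\mathbf{f}}^*}\Phi$ with $\beta_k(\mathbf{v}) := \mathbf{v}^{\sf{H}}\mathbf{E}_k\mathbf{v}/\mathbf{v}^{\sf{H}}\mathbf{v}$, and similarly for $C_{\sf{mc}}$ with $\beta_{K+1}$.

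Next, using $\nabla_{\bar{\mathbf{f}}^*}|x|^2 = 2x\,\nabla_{\bar{\mathbf{f}}^*}x$ on each squared-error summand of $f$ and denoting $\Delta_j := R_{{\sf{target}},j} - g_j - C_j$ and $\Delta_{\sf{mc}} := R_{\sf{target, mc}} - C_{\sf{mc}}$, the stationarity condition $\nabla_{\bar{\mathbf{f}}^*}f = \mathbf{0}$ reduces, after clearing the common $-2/\ln 2$ factor, to
\begin{equation*}
\sum_{j=1}^{K}\Delta_j\!\left[\frac{\mathbf{A}_j^{\sf{p}}}{\bar{\mathbf{f}}^{\sf{H}}\mathbf{A}_j^{\sf{p}}\bar{\mathbf{f}}} - \frac{\mathbf{B}_j^{\sf{p}}}{\bar{\mathbf{f}}^{\sf{H}}\mathbf{B}_j^{\sf{p}}\bar{\mathbf{f}}} + \beta_j(\mathbf{L}_{\sf{A}} - \mathbf{L}_{\sf{B}})\right]\!\bar{\mathbf{f}} + \eta_{\sf{mc}}\Delta_{\sf{mc}}\beta_{K+1}(\mathbf{L}_{\sf{A}} - \mathbf{L}_{\sf{B}})\bar{\mathbf{f}} = \mathbf{0}.
\end{equation*}
The key manipulation is to distribute each $\Delta_j$ across its bracket and transfer every negatively-signed contribution to the opposite side of the equality. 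The positive-coefficient terms---$R_{{\sf{target}},j}\cdot\mathbf{A}_j^{\sf{p}}/(\bar{\mathbf{f}}^{\sf{H}}\mathbf{A}_j^{\sf{p}}\bar{\mathbf{f}})$, $g_j\cdot\mathbf{B}_j^{\sf{p}}/(\bar{\mathbf{f}}^{\sf{H}}\mathbf{B}_j^{\sf{p}}\bar{\mathbf{f}})$, $C_j\cdot\mathbf{B}_j^{\sf{p}}/(\bar{\mathbf{f}}^{\sf{H}}\mathbf{B}_j^{\sf{p}}\bar{\mathbf{f}})$, their $\beta_j\mathbf{L}_{\sf{A}}/\beta_j\mathbf{L}_{\sf{B}}$ analogues, and the multicast counterparts---collect into a matrix $\tilde{\mathbf{A}}$, while their $\mathbf{A}$/$\mathbf{B}$-swapped images collect into a second matrix $\tilde{\mathbf{B}}$; a term-by-term comparison with the summand structure of (\ref{A_kkt})--(\ref{B_kkt}) then shows $\tilde{\mathbf{A}} = \mathbf{A}(\bar{\mathbf{f}},\mathbf{v})/\lambda_{\sf{num}}(\bar{\mathbf{f}},\mathbf{v})$ and $\tilde{\mathbf{B}} = \mathbf{B}(\bar{\mathbf{f}},\mathbf{v})/\lambda_{\sf{den}}(\bar{\mathbf{f}},\mathbf{v})$ exactly.

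Multiplying $\tilde{\mathbf{A}}\bar{\mathbf{f}} = \tilde{\mathbf{B}}\bar{\mathbf{f}}$ by $\lambda_{\sf{num}}$ on both sides and inserting $\lambda_{\sf{den}}/\lambda_{\sf{den}}$ on the right produces $\mathbf{A}\bar{\mathbf{f}} = (\lambda_{\sf{num}}/\lambda_{\sf{den}})\mathbf{B}\bar{\mathbf{f}} = \lambda(\bar{\mathbf{f}},\mathbf{v})\mathbf{B}\bar{\mathbf{f}}$, and because each $\mathbf{A}_k^{\sf{p}}, \mathbf{A}_k^{\sf{c}}$ inherits the positive-definite noise floor $(\sigma_n^2/P)\mathbf{I}$ from (\ref{Ack})--(\ref{Bpk}), the matrix $\mathbf{B}(\bar{\mathbf{f}},\mathbf{v})$ is strictly positive definite and hence invertible, so left-multiplying by $\mathbf{B}^{-1}$ gives the claimed NEPv (\ref{KKT_4_f_final}). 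The principal obstacle is purely bookkeeping: with $K$ private users each contributing three signed pieces in the two summand styles (the Rayleigh ratio and the LogSumExp envelope) together with the multicast term, one must verify line by line that the partition lands in the exact summand layout of (\ref{A_kkt})--(\ref{B_kkt}) rather than any of several plausible alternatives. The particular prefactor split $\lambda_{\sf{num}}$/$\lambda_{\sf{den}}$ is not forced by stationarity---any common positive rescaling would preserve $\mathbf{B}^{-1}\mathbf{A}\bar{\mathbf{f}} = \lambda\bar{\mathbf{f}}$---but is a deliberate design choice so that $\lambda = e^{-f}$, giving the NEPv eigenvalue a direct operational meaning as the exponentiated objective that the subsequent GPI-RS-NOUM algorithm targets as a \emph{dominant} eigenpair.
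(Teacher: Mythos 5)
Your proposal is correct and follows essentially the same route as the paper's Appendix A: compute the gradient of the Rayleigh-quotient log terms and of the LogSumExp envelope (yielding $\mathbf{L}_{\sf A}-\mathbf{L}_{\sf B}$), expand the squared-error stationarity condition, partition the signed terms into $\mathbf{A}(\bar{\mathbf{f}},\mathbf{v})$ and $\mathbf{B}(\bar{\mathbf{f}},\mathbf{v})$ up to the $\lambda_{\sf num}$/$\lambda_{\sf den}$ scalings, and invoke positive definiteness of $\mathbf{B}(\bar{\mathbf{f}},\mathbf{v})$ to left-multiply by its inverse. Your added observation that the $\lambda_{\sf num}/\lambda_{\sf den}$ split is not forced by stationarity but is chosen so that $\lambda=e^{-f}$ is accurate and consistent with the paper's Step III discussion.
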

\begin{proof}
Please refer to Appendix A.
\end{proof}

\subsubsection*{{\bf Step II) First-order KKT condition w.r.t. ${\mathbf{v}}$}\rm}

{Subsequently, we derive the first-order KKT condition of (\ref{P5}) with respect to ${\mathbf{v}}$, i.e., $    \frac{\partial{f(\bar{\mathbf{f}}, \mathbf{v})}} {\partial{{\mathbf{v}}}} = \mathbf{0}$.} The following Lemma \ref{lemma2} shows the first-order KKT condition to $\mathbf{v}$.
\begin{lemma}
\label{lemma2}
The first-order KKT condition of (\ref{P5}) with respect to $\mathbf{v}$ holds when the following equation is satisfied. 
\begin{align}
\label{KKT_4_v_final}
     {\mathbf{E}(\bar{\mathbf{f}}, \mathbf{v})}^{-1}  
     {\mathbf{D}(\bar{\mathbf{f}}, \mathbf{v})}{\mathbf{v}} = \lambda(\bar{\mathbf{f}}, \mathbf{v}){\mathbf{v}},
\end{align}
{where ${\mathbf{D}(\bar{\mathbf{f}}, \mathbf{v})}$ and ${\mathbf{E}(\bar{\mathbf{f}}, \mathbf{v})}$ $\in \mathbb{R}^{(K+1) \times (K+1)}$ are respectively presented as (\ref{D_kkt}) and (\ref{E_kkt}) at the top of the next page.}
\begin{figure*}[!t]
\small
\begin{align}
    \label{D_kkt}
    {\mathbf{D}(\bar{\mathbf{f}}, \mathbf{v})} =  & \lambda_{\sf{num}}(\bar{\mathbf{f}}, \mathbf{v}) \! \times \!
    \Bigg\{ \sum\limits_{j=1}^{K} \Bigg[ R_{{\sf{target}}, j}
\frac{\mathbf{E}_j}{\mathbf{v}^{\sf{H}}\mathbf{v}}
 + \log\left( \frac{1}{K} \sum_{i=1}^{K}\exp\left(-\frac{1}{\alpha}
\log_{2}\left(\frac{\bar{\mathbf{f}}^{\sf{H}}\mathbf{A}_{i}^{\sf{c}}\bar{\mathbf{f}}}{\bar{\mathbf{f}}^{\sf{H}}\mathbf{B}_{i}^{\sf{c}}\bar{\mathbf{f}}}\right)
 \right)\right)^{\!\! -\alpha} \!\!\!\!\!\! \cdot \frac{\mathbf{v}^{\sf{H}}\mathbf{E}_{j}\mathbf{v}}{\mathbf{v}^{\sf{H}}\mathbf{v}}
 \cdot
 \frac{\mathbf{v}^{\sf{H}}\mathbf{E}_{j}\mathbf{v}}{(\mathbf{v}^{\sf{H}}\mathbf{v})^{2}}\mathbf{I}
+  \log_{2} \left( \frac{\bar{\mathbf{f}}^{\sf{H}}\mathbf{A}_{j}^{\sf{p}}\bar{\mathbf{f}}}{\bar{\mathbf{f}}^{\sf{H}}\mathbf{B}_{j}^{\sf{p}}\bar{\mathbf{f}}} \right)
\frac{\mathbf{v}^{\sf{H}}\mathbf{E}_{j}\mathbf{v}}{(\mathbf{v}^{\sf{H}}\mathbf{v})^{2}}\mathbf{I} \Bigg]
\nonumber \\
& + \eta_{\sf{mc}}
\Bigg[ R_{\sf{target, mc}}
\frac{\mathbf{E}_{K+1}}{\mathbf{v}^{\sf{H}}\mathbf{v}}
+ \log\left( \frac{1}{K} \sum_{i=1}^{K}\exp\left(-\frac{1}{\alpha}
\log_{2}\left(\frac{\bar{\mathbf{f}}^{\sf{H}}\mathbf{A}_{i}^{\sf{c}}\bar{\mathbf{f}}}{\bar{\mathbf{f}}^{\sf{H}}\mathbf{B}_{i}^{\sf{c}}\bar{\mathbf{f}}}\right)
 \right)\right)^{\!\! -\alpha} \!\!\!\!\!\! \cdot \frac{\mathbf{v}^{\sf{H}}\mathbf{E}_{K+1}\mathbf{v}}{\mathbf{v}^{\sf{H}}\mathbf{v}} \cdot
 \frac{\mathbf{v}^{\sf{H}}\mathbf{E}_{K+1}\mathbf{v}}{(\mathbf{v}^{\sf{H}}\mathbf{v})^{2}}\mathbf{I}
\Bigg] \Bigg\}.
\end{align}
\noindent\rule{\textwidth}{.5pt}
\end{figure*}
\begin{figure*}[!t]
\small
\begin{align}
    \label{E_kkt}
    {\mathbf{E}(\bar{\mathbf{f}}, \mathbf{v})} =  & \lambda_{\sf{den}}(\bar{\mathbf{f}}, \mathbf{v}) \! \times \!
    \Bigg\{ \sum\limits_{j=1}^{K} \Bigg[ R_{{\sf{target}}, j}
\frac{\mathbf{v}^{\sf{H}}\mathbf{E}_{j}\mathbf{v}}{(\mathbf{v}^{\sf{H}}\mathbf{v})^{2}}\mathbf{I}
 + \log\left( \frac{1}{K} \sum_{i=1}^{K}\exp\left(-\frac{1}{\alpha}
\log_{2}\left(\frac{\bar{\mathbf{f}}^{\sf{H}}\mathbf{A}_{i}^{\sf{c}}\bar{\mathbf{f}}}{\bar{\mathbf{f}}^{\sf{H}}\mathbf{B}_{i}^{\sf{c}}\bar{\mathbf{f}}}\right)
 \right)\right)^{\!\! -\alpha} \!\!\!\!\!\! \cdot
 \frac{\mathbf{v}^{\sf{H}}\mathbf{E}_{j}\mathbf{v}}{\mathbf{v}^{\sf{H}}\mathbf{v}}
 \cdot
\frac{\mathbf{E}_j}{\mathbf{v}^{\sf{H}}\mathbf{v}}
+  \log_{2} \left( \frac{\bar{\mathbf{f}}^{\sf{H}}\mathbf{A}_{j}^{\sf{p}}\bar{\mathbf{f}}}{\bar{\mathbf{f}}^{\sf{H}}\mathbf{B}_{j}^{\sf{p}}\bar{\mathbf{f}}} \right)
 \frac{\mathbf{E}_j}{\mathbf{v}^{\sf{H}}\mathbf{v}} \Bigg]
\nonumber \\
& + \eta_{\sf{mc}} 
\Bigg[ R_{\sf{target, mc}}
\frac{\mathbf{v}^{\sf{H}}\mathbf{E}_{K+1}\mathbf{v}}{(\mathbf{v}^{\sf{H}}\mathbf{v})^{2}}\mathbf{I}
+ \log\left( \frac{1}{K} \sum_{i=1}^{K}\exp\left(-\frac{1}{\alpha}
\log_{2}\left(\frac{\bar{\mathbf{f}}^{\sf{H}}\mathbf{A}_{i}^{\sf{c}}\bar{\mathbf{f}}}{\bar{\mathbf{f}}^{\sf{H}}\mathbf{B}_{i}^{\sf{c}}\bar{\mathbf{f}}}\right)
 \right)\right)^{\!\! -\alpha} \!\!\!\!\!\! \cdot \frac{\mathbf{v}^{\sf{H}}\mathbf{E}_{K+1}\mathbf{v}}{\mathbf{v}^{\sf{H}}\mathbf{v}}
\cdot
\frac{\mathbf{E}_{K+1}}{\mathbf{v}^{\sf{H}}\mathbf{v}}
\Bigg] \Bigg\}.
\end{align}
\noindent\rule{\textwidth}{.5pt}
\end{figure*}
\end{lemma}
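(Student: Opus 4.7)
The plan is to mirror the derivation of Lemma~\ref{lemma1}, now differentiating the unconstrained objective $f(\bar{\mathbf{f}}, \mathbf{v})$ with respect to $\mathbf{v}^{\sf{H}}$ while treating $\bar{\mathbf{f}}$ as fixed. The structural fact that drives everything is that $\mathbf{v}$ enters $f$ only through the common-rate portions $C_j(\bar{\mathbf{f}}, \mathbf{v})$ and $C_{\sf{mc}}(\bar{\mathbf{f}}, \mathbf{v})$, each of which factorizes as a scalar function of $\bar{\mathbf{f}}$ (the LogSumExp surrogate on the right-hand side of (\ref{logsumexp})) multiplied by the generalized Rayleigh quotient $q_k(\mathbf{v}) \triangleq \mathbf{v}^{\sf{H}} \mathbf{E}_k \mathbf{v} / \mathbf{v}^{\sf{H}} \mathbf{v}$. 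The only non-trivial gradient then needed is the quotient-rule identity $\partial q_k / \partial \mathbf{v}^{\sf{H}} = \bigl(\mathbf{E}_k/\mathbf{v}^{\sf{H}}\mathbf{v} - (\mathbf{v}^{\sf{H}}\mathbf{E}_k\mathbf{v})/(\mathbf{v}^{\sf{H}}\mathbf{v})^2\,\mathbf{I}\bigr)\, \mathbf{v}$, which supplies precisely the two matrix pieces $\mathbf{E}_j/\mathbf{v}^{\sf{H}}\mathbf{v}$ and $(\mathbf{v}^{\sf{H}}\mathbf{E}_j\mathbf{v})/(\mathbf{v}^{\sf{H}}\mathbf{v})^2\,\mathbf{I}$ that appear throughout (\ref{D_kkt}) and (\ref{E_kkt}).

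Starting from $\partial f/\partial \mathbf{v}^{\sf{H}} = \mathbf{0}$ and applying the chain rule to the two squared-error terms yields the bare stationarity condition $\sum_{j=1}^{K} e_j\,\partial C_j/\partial \mathbf{v}^{\sf{H}} + \eta_{\sf{mc}}\, e_{\sf{mc}}\,\partial C_{\sf{mc}}/\partial \mathbf{v}^{\sf{H}} = \mathbf{0}$, where $e_j = R_{{\sf target},j} - \log_2(\bar{\mathbf{f}}^{\sf{H}}\mathbf{A}_j^{\sf{p}}\bar{\mathbf{f}}/\bar{\mathbf{f}}^{\sf{H}}\mathbf{B}_j^{\sf{p}}\bar{\mathbf{f}}) - C_j$ and $e_{\sf{mc}} = R_{\sf{target,mc}} - C_{\sf{mc}}$. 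Substituting the Rayleigh-quotient identity for $\partial C_k/\partial \mathbf{v}^{\sf{H}}$ and expanding the error terms, I would regroup the resulting vector equation into the form $\mathbf{M}_+(\bar{\mathbf{f}},\mathbf{v})\,\mathbf{v} = \mathbf{M}_-(\bar{\mathbf{f}},\mathbf{v})\,\mathbf{v}$, where $\mathbf{M}_+$ gathers every contribution that pairs with the $\mathbf{E}_j/\mathbf{v}^{\sf{H}}\mathbf{v}$ piece while $\mathbf{M}_-$ gathers every contribution that pairs with the $(\mathbf{v}^{\sf{H}}\mathbf{E}_j\mathbf{v})/(\mathbf{v}^{\sf{H}}\mathbf{v})^2\,\mathbf{I}$ piece.

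To recast this balance as the NEPv (\ref{KKT_4_v_final}), I would then invoke the same exponential-decoration trick used in the proof of Lemma~\ref{lemma1}. The identity $|x|^2 = -\ln\exp(-|x|^2)$ lets us rewrite $f = \ln(\lambda_{\sf{den}}/\lambda_{\sf{num}})$ with $\lambda_{\sf{num}}$ and $\lambda_{\sf{den}}$ as in (\ref{lambda_num})-(\ref{lambda_den}), so that the target eigenvalue $\lambda = \lambda_{\sf{num}}/\lambda_{\sf{den}}$ from (\ref{lambda}) is already available as a scalar function of $(\bar{\mathbf{f}}, \mathbf{v})$. Multiplying $\mathbf{M}_+$ through by $\lambda_{\sf{num}}$ and $\mathbf{M}_-$ through by $\lambda_{\sf{den}}$ reproduces exactly the matrices $\mathbf{D}(\bar{\mathbf{f}},\mathbf{v})$ and $\mathbf{E}(\bar{\mathbf{f}},\mathbf{v})$ displayed in (\ref{D_kkt})-(\ref{E_kkt}), and the rewritten stationarity condition $\mathbf{M}_+ \mathbf{v} = \mathbf{M}_- \mathbf{v}$ becomes $\mathbf{D}\mathbf{v} = \lambda\,\mathbf{E}\mathbf{v}$; left-multiplying by $\mathbf{E}^{-1}$ yields (\ref{KKT_4_v_final}).

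The main obstacle is bookkeeping rather than mathematics. The sum over $j \in \mathcal{K}$, the separate multicast term carrying weight $\eta_{\sf{mc}}$, and the coupled scalar factors (products of $R_{\sf{target}}$, the rate log-terms $\log_2(\bar{\mathbf{f}}^{\sf{H}}\mathbf{A}_j^{\sf{p}}\bar{\mathbf{f}}/\bar{\mathbf{f}}^{\sf{H}}\mathbf{B}_j^{\sf{p}}\bar{\mathbf{f}})$, and the ratios $\mathbf{v}^{\sf{H}}\mathbf{E}_k\mathbf{v}/\mathbf{v}^{\sf{H}}\mathbf{v}$) produce many cross-terms that must each be assigned to $\mathbf{D}$ or to $\mathbf{E}$ in precisely the pairing shown in (\ref{D_kkt})-(\ref{E_kkt}), with the $\lambda_{\sf{num}}$ and $\lambda_{\sf{den}}$ prefactors positioned correctly. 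Since every manipulation reuses only the Rayleigh-quotient identity, the chain rule, and the exponential-decoration step already verified for $\bar{\mathbf{f}}$ in Lemma~\ref{lemma1}, no new analytical difficulty arises, and the complete term-by-term matching would naturally be deferred to an appendix.
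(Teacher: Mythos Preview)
Your approach is essentially the paper's own: compute the Rayleigh-quotient gradient $\partial q_k/\partial\mathbf{v}^{\sf H}$, expand $\partial f/\partial\mathbf{v}=\mathbf{0}$ by the chain rule, split the resulting vector identity into two matrices, attach the $\lambda_{\sf num}$/$\lambda_{\sf den}$ prefactors, and invert. One point needs correction, though. Your stated rule for the split---``$\mathbf{M}_+$ gathers every contribution that pairs with the $\mathbf{E}_j/\mathbf{v}^{\sf H}\mathbf{v}$ piece while $\mathbf{M}_-$ gathers the $(\mathbf{v}^{\sf H}\mathbf{E}_j\mathbf{v})/(\mathbf{v}^{\sf H}\mathbf{v})^2\,\mathbf{I}$ piece''---does not match (\ref{D_kkt})--(\ref{E_kkt}) and would not work. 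Inspecting those displays, the $R_{\sf target}$ terms place the $\mathbf{E}_j/(\mathbf{v}^{\sf H}\mathbf{v})$ piece in $\mathbf{D}$ and the $\mathbf{I}$ piece in $\mathbf{E}$, whereas the LogSumExp and private-rate log terms do the reverse. The split is by \emph{sign} of each scalar coefficient in the expanded gradient (the $R_{\sf target}$ contributions enter with a minus sign, the log contributions with a plus sign), not by which half of the Rayleigh-quotient derivative a term carries. This sign-based grouping is what guarantees that both $\mathbf{D}$ and $\mathbf{E}$ are sums of PD and PSD matrices, hence PD and invertible---a step the paper states explicitly and which you should not omit, since your final move ``left-multiply by $\mathbf{E}^{-1}$'' depends on it. With that correction, your outline coincides with Appendix~B.
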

\begin{proof}
    Please refer to Appendix B.
\end{proof}

\subsubsection*{{\bf Step III) First-order KKT condition of $\mathscr{P}_{5}$}\rm}

{By stacking (\ref{KKT_4_f_final}) and (\ref{KKT_4_v_final}) into a block-diagonal matrix form, the first-order KKT condition of (\ref{P5}) is expressed into the NEPv as}
\begin{align}
\label{KKT_sol}
&\begin{bmatrix}
{\mathbf{B}(\bar{\mathbf{f}}, \mathbf{v})}^{-1} & \mathbf{0} \\
\mathbf{0} & {\mathbf{E}(\bar{\mathbf{f}}, \mathbf{v})}^{-1}
\end{bmatrix}
\!\!\!
\begin{bmatrix}
{\mathbf{A}(\bar{\mathbf{f}}, \mathbf{v})} & \mathbf{0} \\
\mathbf{0} & {\mathbf{D}(\bar{\mathbf{f}}, \mathbf{v})}
\end{bmatrix}
\!\!\!
\begin{bmatrix}
\bar{\mathbf{f}}
\\
\mathbf{v}
\end{bmatrix}
\!\! = \lambda(\bar{\mathbf{f}}, \mathbf{v}) \!\! \begin{bmatrix}
\bar{\mathbf{f}}
\\
\mathbf{v}
\end{bmatrix}
\nonumber \\
& \Leftrightarrow \mathbf{U}(\bar{\mathbf{f}}, \mathbf{v})^{-1}
\mathbf{W}(\bar{\mathbf{f}}, \mathbf{v}) 
\begin{bmatrix}
\bar{\mathbf{f}}
\\
\mathbf{v}
\end{bmatrix}
= \lambda(\bar{\mathbf{f}}, \mathbf{v}) \begin{bmatrix}
\bar{\mathbf{f}}
\\
\mathbf{v}
\end{bmatrix},
\end{align}
{where $[\bar{\mathbf{f}}^{\sf{T}}, \mathbf{v}^{\sf{T}}]^{\sf{T}}$, $\mathbf{U}(\bar{\mathbf{f}}, \mathbf{v})^{-1} \mathbf{W}(\bar{\mathbf{f}}, \mathbf{v})$, and $\lambda(\bar{\mathbf{f}}, \mathbf{v})$ are eigenvector, {eigenvector-dependent} matrix, and eigenvalue, respectively. 
Herein, $\mathbf{U}(\bar{\mathbf{f}}, \mathbf{v})$ and $\mathbf{W}(\bar{\mathbf{f}}, \mathbf{v})$ are defined as follows:}
\begin{align}
\label{U,W_kkt}
&\begin{bmatrix}
{\mathbf{B}(\bar{\mathbf{f}}, \mathbf{v})} & \mathbf{0} \\
\mathbf{0} & {\mathbf{E}(\bar{\mathbf{f}}, \mathbf{v})}
\end{bmatrix}
\triangleq \mathbf{U}(\bar{\mathbf{f}}, \mathbf{v}) \in \mathbb{C}^{N_{\sf{t}}(K+2) \times N_{\sf{t}}(K+2)},
\nonumber \\
&\begin{bmatrix}
{\mathbf{A}(\bar{\mathbf{f}}, \mathbf{v})} & \mathbf{0} \\
\mathbf{0} & {\mathbf{D}(\bar{\mathbf{f}}, \mathbf{v})}
\end{bmatrix}
\! \triangleq \mathbf{W}(\bar{\mathbf{f}}, \mathbf{v}) \in \mathbb{C}^{N_{\sf{t}}(K+2) \times N_{\sf{t}}(K+2)}.
\end{align}
{We note that the eigenvalue $\lambda(\bar{\mathbf{f}}, \mathbf{v})$ of (\ref{KKT_sol}) is given by $\exp{\left(-f(\bar{\mathbf{f}}, \mathbf{v})\right)}$ in which $f(\bar{\mathbf{f}}, \mathbf{v})$ is the objective function for $\mathscr{P}_{5}$.
Thus, identifying the leading eigenvector that maximizes the eigenvalue is equivalent to finding the best local optimums $\bar{\mathbf{f}}^{[\star]}$ and $\mathbf{v}^{[\star]}$ that minimize the disparity between the demands and offered rates among the first-order optimal points.}

\subsubsection*{{\bf Step IV) GPI-RS-NOUM algorithm}\rm}

{To obtain the best local optima $\bar{\mathbf{f}}^{[\star]}$ and $\mathbf{v}^{[\star]}$ in a computational efficient manner, we propose GPI-RS-NOUM algorithm built upon the method in \cite{choi2019joint, park2022rate}.} We update $\bar{\mathbf{f}}^{[t]}$ and $\mathbf{v}^{[t]}$ at the $t$-th iteration by
\begin{align}
\label{GPI}
     &
     [(\bar{\mathbf{f}}^{[t]})^{\sf{T}}, (\mathbf{v}^{[t]})^{\sf{T}}]^{\sf{T}}
     \leftarrow \\
     &
     \frac{\mathbf{U}(\bar{\mathbf{f}}^{[t-1]}, \mathbf{v}^{[t-1]})^{-1}\mathbf{W}(\bar{\mathbf{f}}^{[t-1]}, \mathbf{v}^{[t-1]})[(\bar{\mathbf{f}}^{[t-1]})^{\sf{T}}, (\mathbf{v}^{[t-1]})^{\sf{T}}]^{\sf{T}}}
     {\Vert \mathbf{U}(\bar{\mathbf{f}}^{[t-1]}, \mathbf{v}^{[t-1]})^{-1}\mathbf{W}(\bar{\mathbf{f}}^{[t-1]}, \mathbf{v}^{[t-1]})[(\bar{\mathbf{f}}^{[t-1]})^{\sf{T}}, (\mathbf{v}^{[t-1]})^{\sf{T}}]^{\sf{T}} \Vert}
     \nonumber
\end{align}
until the stopping criterions $\Vert \bar{\mathbf{f}}^{[t]}-\bar{\mathbf{f}}^{[t-1]} \Vert < \epsilon$ and $\Vert {\mathbf{v}}^{[t]}-{\mathbf{v}}^{[t-1]} \Vert < \epsilon$ are met for the sufficiently small tolerance level $\epsilon$. 
With the obtained solutions $\bar{\mathbf{f}}^{[\star]}$ and $\mathbf{v}^{[\star]}$, instantaneous unicast and multicast rates are calculated.
The detailed procedure of the proposed algorithm is summarized in \textbf{Algorithm \ref{Algorithm 1}}. 
The convergence speed of the proposed algorithm is determined by a scale of $\alpha$ \cite{park2022rate}. 
As the scale of $\alpha$ diminishes, the {difference} between the LogSumExp and true minimum function values narrows; however, the convergence decelerates due to the degradation of the smoothness characteristic inherent to the LogSumExp function.
On the other hand, with larger values of $\alpha$, the algorithm achieves rapid convergence at the expense of compromised performance.
To strike a balance, we initialize the value of $\alpha$ to a modest level and escalate it if the algorithm does not converge within the maximum iteration number, $t^{\sf{max}}$.   

\begin{remark} 
{{\rm\textbf{(Principle of GPI-RS-NOUM algorithm)}:}}
    \rm{ {The principle of the GPI-RS-NOUM algorithm is elucidated using the conventional power iteration method.
In this method, the leading eigenvector of a matrix ${\mathbf{M}} \in \mathbb{C}^{N \times N}$ is obtained by iteratively computing ${\mathbf{q}}^{[t]} = \frac{{\mathbf{M}}^{t} {\mathbf{q}}^{[0]}}{\Vert {\mathbf{M}}^{t} {\mathbf{q}}^{[0]} \Vert }$. 
The reasoning behind the convergence of the conventional power iteration method is as follows. 
Let ${\mathbf{x}}^{[\star]}$ be the leading eigenvector and $\mathbf{x}_i$, $ \forall i \in \{2, \cdots, N\}$, be the non-leading eigenvectors, with the corresponding eigenvalues satisfying $\vert \lambda^{[\star]} \vert > \vert \lambda_2 \vert \geq \cdots \geq \vert \lambda_N \vert$.
Since the eigenvectors form a basis, any arbitrary unit vector ${\mathbf{q}}^{[0]}$ can be expressed as ${\mathbf{q}}^{[0]} = \alpha_1 {\mathbf{x}}^{[\star]} + \sum_{i = 2}^{N} \alpha_i {\mathbf{x}}_i$, where $\alpha_i$ are the corresponding coefficients. Also, as ${\mathbf{M}}({\mathbf{x}}^{[\star]} + \sum_{i = 2}^{N} {\mathbf{x}}_i) = {\lambda}^{[\star]}{\mathbf{x}}^{[\star]} + \sum_{i = 2}^{N} {\lambda}_i {\mathbf{x}}_i $, the following holds.
\begin{align}
\label{rev_1}
    \mathbf{M}^{t}\mathbf{q}^{[0]}  & = \alpha_1 (\lambda^{[\star]})^{t}\mathbf{x}^{[\star]} + 
    \sum_{i =2}^N \alpha_i \lambda_i^{t} \mathbf{x}_i
    \nonumber \\
     & = \alpha_1(\lambda^{[\star]})^{t} \left(\mathbf{x}^{[\star]} + \underbrace{\sum_{i=2}^N \frac{\alpha_i}{\alpha_1}\left(\frac{\lambda_i}{\lambda^{[\star]}}\right)^{t}\mathbf{x}_i}_{(d)}\right),
 \end{align}
where $(d)$ vanishes as $t \rightarrow \infty$. As such, ${\mathbf{q}}^{[t]}$ converges to the leading eigenvector  ${\mathbf{x}}^{[\star]}$ by iteratively projecting ${\mathbf{q}}^{[t]}$ onto ${\mathbf{M}}$.}

{Our problem, cast as an NEPv \cite{cai2018eigenvector}, extends the conventional eigenvalue problem by taking into account the matrix $\mathbf{U}(\bar{\mathbf{f}}, \mathbf{v})^{-1} \mathbf{W}(\bar{\mathbf{f}}, \mathbf{v})$, which depends on the eigenvector $[\bar{\mathbf{f}}^{\sf{T}}, \mathbf{v}^{\sf{T}}]^{\sf{T}}$.
For convenience, we substitute $\mathbf{U}(\bar{\mathbf{f}}, \mathbf{v})^{-1} \mathbf{W}(\bar{\mathbf{f}}, \mathbf{v})$ with ${\mathbf{M}}({\mathbf{x}}) \in \mathbb{C}^{N \times N}$ and $[\bar{\mathbf{f}}^{\sf{T}},\mathbf{v}^{\sf T}]^{\sf{T}}$ with $\mathbf{x} \in \mathbb{C}^{N \times 1}$ in the following explanations, without loss of generality.
To identify the leading eigenvector ${\mathbf{x}}^{[\star]}$ of the eigenvector-dependent matrix ${\mathbf{M}}({\bf{x}})$ that fulfills ${\mathbf{M}}({\mathbf{x}}^{[\star]}) {\mathbf{x}}^{[\star]} = \lambda^{[\star]} {\mathbf{x}}^{[\star]}$, where  $\lambda^{[\star]}$ is the maximum eigenvalue, we update
$\mathbf{x}^{[t]}$ for the $t$-th iteration as follows:
\begin{align}
\label{rev_1.5}
  \mathbf{x}^{[t]} \leftarrow \frac{\mathbf{M}(\mathbf{x}^{[t-1]}){\mathbf{x}^{[t-1]}}}{\Vert \mathbf{M}(\mathbf{x}^{[t-1]}){\mathbf{x}^{[t-1]}} \Vert}.
\end{align}
The iterative principle of (\ref{rev_1.5}) is similar to the conventional power iteration method, but a different matrix is used for the update at each iteration.
To demonstrate the rationale for achieving the leading eigenvector ${\mathbf{x}}^{[\star]}$ through (\ref{rev_1.5}), we use the Taylor expansion of ${\mathbf{M}}({\mathbf{x}}){\mathbf{x}}$ around ${\mathbf{x}}^{[\star]}$ to derive
\begin{align}
\label{rev_2}
      ({\mathbf{M}}({\mathbf{x}}){\mathbf{x}})^{\sf{H}}\mathbf{x} & = ({\mathbf{M}}({\mathbf{x}}^{[\star]}){\mathbf{x}}^{[\star]})^{\sf{H}}\mathbf{x} 
      \\
      & + ({\mathbf{x}}-{\mathbf{x}}^{[\star]})^{\sf{H}} \nabla_{\mathbf{x}}[ {\mathbf{M}}({\mathbf{x}}^{[\star]}){\mathbf{x}}^{[\star]}]\mathbf{x} + o(\| {\mathbf{x}} \! - \! {\mathbf{x}}^{[\star]} \|)
      \nonumber 
\end{align}
with an arbitrary unit vector $\mathbf{x}$.
Then, with a set of basis $\{{\mathbf{x}}^{[\star]}, {\mathbf{u}}_2, \cdots, {\mathbf{u}}_N \}$, the equation (\ref{rev_2}) can be represented as
\begin{align}
\label{rev_2.5}
     \!\!\!\!\! ({\mathbf{M}}({\mathbf{x}}){\mathbf{x}})^{\sf{H}}{\mathbf{x}} &= \alpha_{1}({\mathbf{M}}({\mathbf{x}}){\mathbf{x}})^{\sf{H}} 
      \mathbf{x}^{[\star]}+{\sum_{i=2}^{N}{\alpha_{i}}({\mathbf{M}}({\mathbf{x}}){\mathbf{x}})^{\sf{H}} \mathbf{u}_{i}}.
\end{align}
Herein, since ${\mathbf{M}}({\mathbf{x}}^{[\star]}) {\mathbf{x}}^{[\star]} = \lambda^{[\star]} {\mathbf{x}}^{[\star]}$ holds, we have
\begin{align}
\label{rev_3}
           \left[ ({\mathbf{M}}({\mathbf{x}}){\mathbf{x}})^{\sf{H}}\mathbf{x}^{[\star]} \right]^2  = \left[ \lambda^{[\star]} + o(\| {\mathbf{x}} - {\mathbf{x}}^{[\star]} \|) \right]^2.
\end{align}
Moreover, assuming that the vectors ${\mathbf{x}}^{[\star]}, {\mathbf{u}}_2, \cdots, {\mathbf{u}}_N$ are orthonormal to each other, we can yield
\begin{align}
\label{rev_4}
&\sum_{i=2}^{N}  \left[ ({\mathbf{M}}({\mathbf{x}}){\mathbf{x}})^{\sf{H}}\mathbf{u}_{i} \right]^2 
\nonumber \\
& \! \le \sum_{i=2}^{N} \left[ \lambda_{i}^2(\mathbf{x}^{\sf{H}}\mathbf{u}_{i})^2 \! + \!  2\lambda_{i} ( \mathbf{x}^{\sf{H}}\mathbf{u}_{i} ) o(\| {\mathbf{x}} \! - \! {\mathbf{x}}^{[\star]} \|) 
 \! + \!  o(\| {\mathbf{x}} \! - \! {\mathbf{x}}^{[\star]} \|)^2 \right]
\nonumber \\
& \! \le \left[ \lambda_2 \Vert \mathbf{x} - \mathbf{x}^{[{\star}]} \Vert + o(\| {\mathbf{x}} - {\mathbf{x}}^{[\star]} \|) \right]^2
\end{align}
because 
\begin{align}
\label{rev_5}
    \sum_{i = 2}^{N} ({\mathbf{x}}^{\sf H} {\mathbf{u}}_i)^2 & = 1 - ({\mathbf{x}}^{\sf H} {\mathbf{x}}^{[\star]})^2 
    \nonumber \\
     & \le 2(1 - {\mathbf{x}}^{\sf H} {\mathbf{x}}^{[\star]}) 
     \le \|{\mathbf{x}} - {\mathbf{x}}^{[\star]}\|^2.
\end{align}
Given that $\vert \lambda^{[\star]} \vert> \vert \lambda_2 \vert \ge \vert \lambda_i \vert, \forall i \neq 2$, the components associated with the non-leading eigenvectors ${\mathbf{x}}_2, \cdots, {\mathbf{x}}_N$ vanish through iterative projection of ${\mathbf{x}}$ onto ${\mathbf{M}}({\mathbf{x}})$. Consequently, only the component associated with the leading eigenvector} 
{${\mathbf{x}}^{[\star]}$ persists. Thus, the GPI-RS-NOUM algorithm achieves the leading eigenvector ${\mathbf{x}}^{[\star]}$, as also described in \cite{park2022rate}.
\footnote{{
The self-consistent field (SCF) iteration algorithm has been demonstrated to converge to the leading eigenvectors of NEPv given certain mild conditions \cite{cai2018eigenvector}.
The SCF algorithm iteratively updates \({\mathbf{x}}^{[t]}\) at the \(t\)-th iteration by performing eigenvector decomposition on \({\mathbf{M}}({\mathbf{x}}^{[t-1]})\); thus, it can be interpreted as a generalized eigenvector decomposition. 
Considering that the proposed algorithm is a generalized version of the conventional power iteration, which tracks the outcomes of eigenvector decomposition in conventional eigenvalue problems, we can infer that the proposed algorithm also converges to the leading eigenvector found by the SCF algorithm.
To support this, we validate the convergence of the proposed algorithm through simulations in Section IV.}}} 
}
\end{remark}

\begin{remark} {\rm\textbf{(Algorithm complexity)}:}
\rm{The primary computational complexity of the GPI-RS-NOUM algorithm lies in the inverse calculation of $\mathbf{B}(\bar{\mathbf{f}}, \mathbf{v})$ and $\mathbf{E}(\bar{\mathbf{f}}, \mathbf{v})$.
Given that matrix $\mathbf{B}(\bar{\mathbf{f}}, \mathbf{v})$ is a linear combination of block diagonal matrices, the inversion of $\mathbf{B}(\bar{\mathbf{f}}, \mathbf{v})$ can be achieved by computing the inverses of its constituent submatrices. Thus, the computational complexity of $\mathbf{B}(\bar{\mathbf{f}}, \mathbf{v})^{-1}$ is in order of $\mathcal{O}( \frac{1}{3} N_{\sf{t}}^3 K  )$. 
The computational complexity of $\mathbf{E}(\bar{\mathbf{f}}, \mathbf{v})^{-1}$ is in order of $\mathcal{O}(K+1)$ since it is formed {from} a linear combination of diagonal matrices.
Therefore, the total computational complexity of the proposed GPI-RS-NOUM algorithm in a big-O sense is $\mathcal{O}(\frac{1}{3} N_{\sf{t}}^3 K)$.}
\end{remark}

\begin{remark} {\rm\textbf{(Key differences from prior GPI-RS-based algorithms \cite{park2022rate, parks2023rate, kim2023distributed, maxminGPI})}:}
    \rm{{In spite of the proven superiority of generalized power iteration for rate-splitting (GPI-RS) algorithm over conventional methods, such as the weighted minimum mean square error (WMMSE) algorithm,
    prior work has not rigorously determined an optimal common rate portion  \cite{park2022rate, parks2023rate, kim2023distributed, maxminGPI}.}   
    {In NOUM transmission, however, optimizing the common rate portions in a rigorous way is essential since the multicast message is incorporated in the common stream.}
    {In stark contrast to the previous studies,} we jointly optimize precoding vectors and common rate portions by expressing the common rate portion as the ratio of the minimum common rate. {Notably, while jointly optimizing the common rate portion, our proposed algorithm retains the same computational complexity as previous GPI-RS-based algorithms that update the precoding vector alone, in terms of big-O notation.}}
\end{remark}


\begin{algorithm}[!t]
\caption{GPI-RS-NOUM Algorithm}\label{Algorithm 1}
\begin{algorithmic}[1]
\State \textbf{Initialize}: $\bar{\mathbf{f}}^{[0]}$ and $\mathbf{v}^{[0]}$
\State Set the iteration count $t=1$.
\Repeat
      \State $t \leftarrow t+1$ 
      \State Construct $\mathbf{U}(\bar{\mathbf{f}}^{[t-1]}, \mathbf{v}^{[t-1]})$ and $\mathbf{W}(\bar{\mathbf{f}}^{[t-1]}, \mathbf{v}^{[t-1]})$ based on (\ref{U,W_kkt}).
      \State Update $\bar{\mathbf{f}}^{[t]}$ and $\mathbf{v}^{[t]}$ based on (\ref{GPI}).
\Until{$\Vert \bar{\mathbf{f}}^{[t]}-\bar{\mathbf{f}}^{[t-1]} \Vert < \epsilon$ and $\Vert {\mathbf{v}}^{[t]}-{\mathbf{v}}^{[t-1]} \Vert < \epsilon$}
\State \textbf{Return}: Calculate instantaneous unicast rate $R_{k}^{\sf{ins}}$, $\forall k \in \mathcal{K}$ and multicast rate $C_{\sf{mc}}^{\sf{ins}}$ with $\bar{\mathbf{f}}^{[\star]}$ and $\mathbf{v}^{[\star]}$.
\end{algorithmic}
\end{algorithm}

\section{Performance Evaluation}

In this section, we evaluate the performance of the proposed framework (denoted as ``{\sf {GPI-RS-NOUM}}''). 
Unless explicitly stated otherwise, the simulation parameters are configured as the following.
{The LEO satellite, located at an altitude of $600$ km, covers a service area spanning a radius of $120$ km with a total transmit power budget of $50$ W.}
{ Within the coverage footprint, the satellite equipped with $36$ antennas, i.e., $N_{\sf{t}}^x = N_{\sf{t}}^y = 6$, serves $8$ users in the Ka-band operating bandwidth.}
The carrier frequency, bandwidth, Rician K-factor, transmit antenna gain, user antenna gain, system noise temperature, and variance of noise are set to be $f_{\sf{c}} = 20$ GHz, $B = 10$ MHz, $\kappa = 12$ dB, $G_{\sf{Tx}} = 6$ dBi, $G_{\sf{Rx}} = 25$ dBi, $T_{\sf{sys}} = 150$ K, and $\sigma_n^2 = 1$, respectively.
$\alpha$, $\epsilon$, and $t^{\sf{max}}$ are set as $\alpha = 10^{-2}$, $\epsilon = 10^{-4}$, and $t^{\sf{max}} = \num{1000}$.
{ Besides, to fairly satisfy the traffic demands for both unicast and multicast messages, we set the regularization parameter $\eta_{\sf{mc}}$ to the ratio of the average unicast traffic demand to the multicast traffic demand.}
Numerical results are obtained from the average of $\num{1000}$ channel realizations, with the location of users being randomly determined at each realization.
The proposed scheme is compared with the following baseline methods, under both perfect and imperfect CSIT conditions.
\begin{itemize}

\item { \textbf{RM-OUM}}:
This method designs the rate-matching precoder based on the OUM transmission. To be specific, unicast and multicast transmissions are performed by dividing the given resource in half, such as TDM or frequency division multiplexing (FDM). Since the given resource is divided for unicast and multicast transmission, the multicast message does not suffer from interference signals, while unicast messages suffer from interference signals induced by other unicast messages. 
\item { \textbf{LDM-RM-NOUM} \cite{kim2008superposition}}:
The rate-matching precoder for NOUM transmission is designed based on the principle of LDM. The multicast stream is first decoded by treating unicast streams as noise and is removed from the received signal through SIC. Then, dedicated unicast streams are decoded considering other unicast streams as noise \cite{kim2008superposition}.
The optimization problem can be solved via setting $\mathbf{v}$ as $\mathbf{v}=[0,\cdots,0,1]^{\sf{T}}$ from the proposed scheme.

\item {\textbf{SCA-RM-NOUM} \cite{cui2023energy}}: {This approach extends the method in \cite{cui2023energy} by taking into account the multicast traffic demands.} In \cite{cui2023energy}, the RSMA precoder is designed to minimize unmet/unused rates of unicast messages by utilizing the MMSE precoder as the normalized private precoding vectors. The remaining RSMA parameters, such as the common precoder and allocated power for the normalized private precoding vectors, are optimized via the successive convex approximation (SCA) method. Given that the original work \cite{cui2023energy} has not addressed multicast transmission, we adjust the objective function to minimize unmet/unused rates for both unicast and multicast messages within the RSMA precoder design.
\item { \textbf{WMMSE-MMF-NOUM} \cite{li2023cooperative}}: 
This method refers to an RSMA-based NOUM transmission, aimed at maximizing the minimum unicast rate across multiple users while meeting multicast traffic demands under the QoS constraint.
To optimize the RSMA precoder with this purpose, the WMMSE-based alternative optimization (AO) algorithm has been utilized in \cite{li2023cooperative}. Note that since the ideal perfect CSIT case has only been explored in 
\cite{li2023cooperative}, we employ the sample average approximation (SAA) technique to incorporate the effects of imperfect CSIT. To do so, we randomly generate $\num{1000}$ channel samples $\mathbf{h}_{k} = g_{k} \mathbf{a}_{k}$ in which $\mathbf{a}_{k}$ is given and $g_{k}$ is randomly produced such that ${\sf{Re}}\{g_{k}\}, {\sf{Im}}\{g_{k}\} \sim \mathcal{N}\left(\sqrt{\frac{\kappa_k\gamma_k}{2(\kappa_k+1)}},\frac{\gamma_k}{2(\kappa_k+1)}\right)$.
\end{itemize}
In what follows, the simulation results are demonstrated in which {\sf{UC}} and {\sf{MC}} denote unicast and multicast, respectively.

\subsubsection*{{\bf Achievable rate comparison}\rm}


\begin{figure}[!h]
\centering
    \subfigure[Perfect CSIT]{\includegraphics[width=.7\columnwidth]{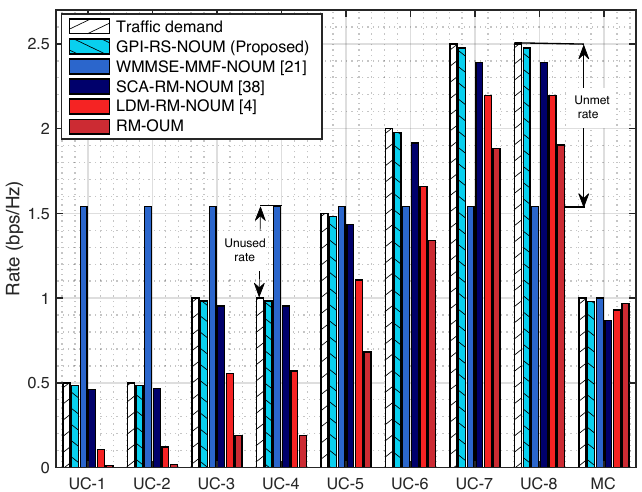}\label{result_1}}
    \\
    \subfigure[Imperfect CSIT]{\includegraphics[width=.7\columnwidth]{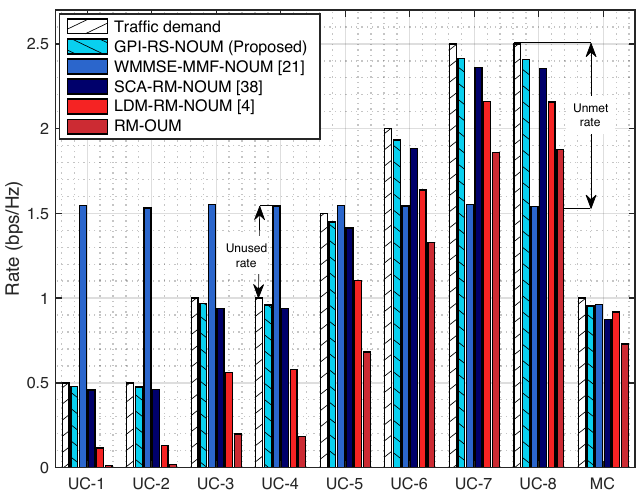}\label{result_2}}
\caption{
Achievable rate comparison for each message. The unicast and multicast traffic demands are set to be $\mathbf{r}_{\sf target, uc} = [0.5, 0.5, 1, 1, 1.5, 2, 2.5, 2.5]^{\sf{T}}$ bps/Hz and $R_{\sf target, mc} = 1$ bps/Hz.}
\vspace{-2mm}
\end{figure}

We compare the achievable rate of each scheme under both perfect and imperfect CSIT cases through Fig. \ref{result_1} and  Fig. \ref{result_2}.
The leftmost white bar represents the traffic demand of each message, where the traffic demands of unicast and multicast messages are set to be $\mathbf{r}_{\sf target, uc} = [0.5, 0.5, 1, 1, 1.5, 2, 2.5, 2.5]^{\sf{T}}$ bps/Hz and $R_{\sf target, mc} = 1$ bps/Hz, respectively.
As shown in the figures, the proposed {\sf {GPI-RS-NOUM}} framework achieves superior performance over benchmark schemes in terms of traffic demand satisfaction, demonstrating substantial improvements regardless of CSI conditions.
In particular, we observe that {\sf{WMMSE-MMF-NOUM}} leads to pronounced unused or unmet unicast rates in both perfect and imperfect CSIT cases due to its oversight of non-uniform unicast demand distribution.

{The considerable performance gap between the proposed scheme and {\sf {LDM-RM-NOUM}}, illustrated in Fig. \ref{result_1} and Fig. \ref{result_2}, stems from the pivotal role played by the common stream.}
{To elaborate more in detail, we provide the rate portion comparison between {\sf {GPI-RS-NOUM}} and {\sf {LDM-RM-NOUM}} in Fig. \ref{result_3}. The blue, yellow, and orange bars in Fig. \ref{result_3} denote the rate portion of common, multicast, and private streams, respectively.}
{Since we aim to minimize the sum of {disparity} between traffic demands and actual offered rates, in {\sf {LDM-RM-NOUM}}, the LEO satellite prioritizes fulfilling the traffic demands of users with higher unicast traffic demands.}
{During this process, significant inter-user interference is induced on users with lower unicast traffic demands, exacerbating the rate-matching performance.} {Moreover, owing to inter-user interference among the users with high unicast traffic demands, it turns out that those are also not completely met.}
{Additionally, interference from unicast messages to the multicast message leads to the disparity between the traffic demands and actual offered rates.}
The proposed {\sf {GPI-RS-NOUM}} scheme utilizes a common stream decodable by all users not only for serving the multicast message but also for partially fulfilling unicast traffic demands, as depicted in Fig. \ref{result_3}. In other words, the proposed scheme allows the LEO satellite to flexibly adjust transmit power for common and private streams according to the traffic demands; the proposed {\sf {GPI-RS-NOUM}} scheme, in turn, effectively reduces inter-user interference and interference from unicast messages to the multicast message, enabling it to uniformly satisfy the traffic demands of both unicast and multicast messages.
{Further, it is shown that the proposed scheme outperforms {\sf{SCA-RM-NOUM}} in terms of satisfying traffic demands for all messages.
This is because the proposed scheme expresses the first-order KKT condition of the reformulated problem into NEPv, achieving the best local optimal point via the GPI-based method. On the other hand, {\sf{SCA-RM-NOUM}} combines SCA and MMSE methods, designing the RSMA precoder that is not optimal.}
Meanwhile, {\sf {RM-OUM}} exhibits the worst performance due to inefficient resource utilization.

\begin{figure}[t]
    \centering
    \includegraphics[width=0.9\linewidth]{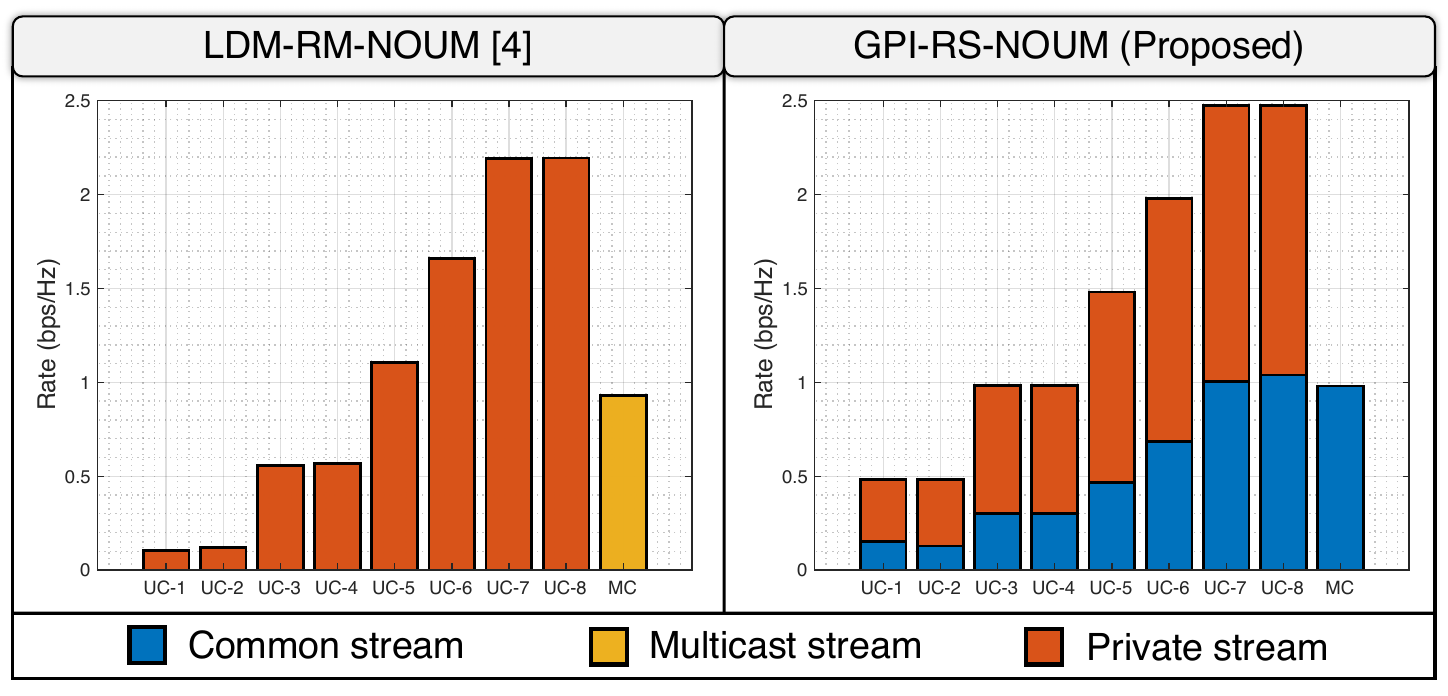}
    \caption{Rate portion comparison between {\sf {GPI-RS-NOUM}} and {\sf {LDM-RM-NOUM}} for each message under perfect CSIT. The unicast and multicast traffic demands are set to be $\mathbf{r}_{\sf target, uc} = [0.5, 0.5, 1, 1, 1.5, 2, 2.5, 2.5]^{\sf{T}}$ and $R_{\sf target, mc} = 1$ bps/Hz.}
    \label{result_3}
\end{figure}

\begin{figure}[!t]
\centering
    \subfigure[Perfect CSIT]{\includegraphics[width=.7\columnwidth]{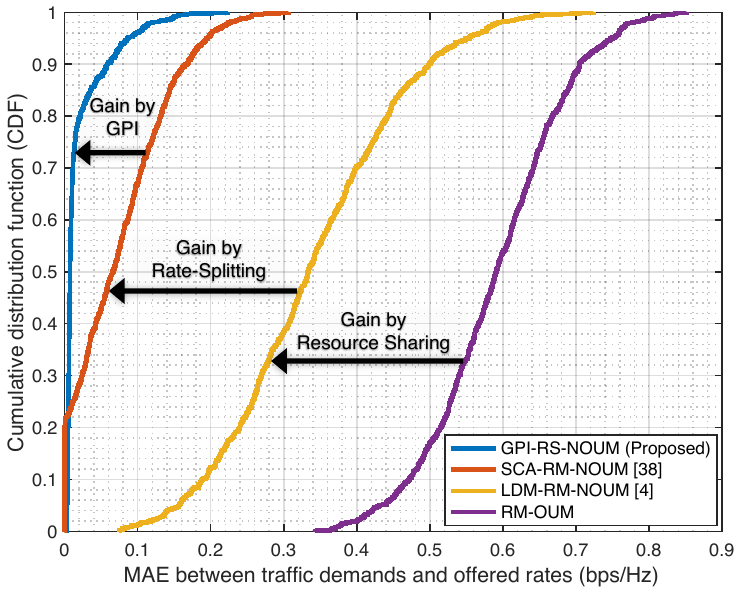}\label{result_4}}
    \\
    \subfigure[Imperfect CSIT]{\includegraphics[width=.7\columnwidth]{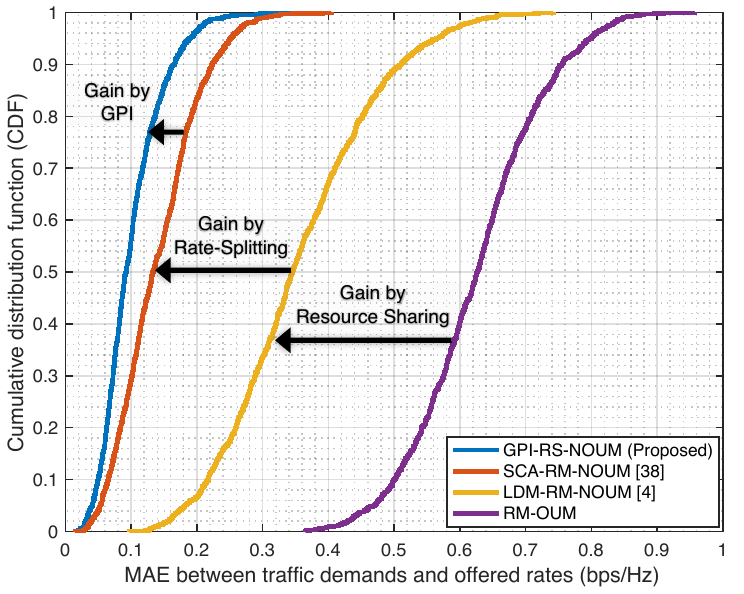}\label{result_5}}
\caption{
Comparison of CDF per MAE between the traffic demands and offered rates. The unicast and multicast traffic demands are set as $\mathbf{r}_{\sf target, uc} = [0.5, 0.5, 1, 1, 1.5, 2, 2.5, 2.5]^{\sf{T}}$ bps/Hz and $R_{\sf target, mc} = 1$ bps/Hz.}
\vspace{-2mm}
\end{figure}

\subsubsection*{{\bf Rate-matching error comparison}\rm}
The rate-matching error is compared in Fig. \ref{result_4} and Fig. \ref{result_5} under $\mathbf{r}_{\sf target, uc} = [0.5, 0.5, 1, 1, 1.5, 2, 2.5, 2.5]^{\sf{T}}$ bps/Hz and $R_{\sf target, mc} = 1$ bps/Hz. 
The cumulative distribution function (CDF) is demonstrated under perfect and imperfect CSIT in Fig. \ref{result_4} and Fig. \ref{result_5}, respectively.
The x-axis presents the mean absolute error (MAE) between the traffic demands and offered rates, i.e., 
\begin{align}
\frac{1}{K+1} \bigg( \sum_{j=1}^{K}{\left\vert R_{{\sf{target}}, j} - R_{j}^{\sf{ins}}\right\vert} + {\left\vert R_{\sf{target, mc}} - C_{\sf{mc}}^{\sf{ins}}\right\vert} \bigg),    
\end{align}
where lower levels indicate better performance.
Overall, as depicted in Fig. \ref{result_4} and Fig. \ref{result_5}, the proposed scheme exhibits the lowest MAE levels compared to the benchmark schemes regardless of the CSIT condition. 
To elaborate further, compared to {\sf{SCA-RM-NOUM}}, our proposed scheme achieves a $95$-percentile MAE reduction of approximately $55.6$ and $24.7$ $\%$ under perfect and imperfect CSIT conditions, respectively. {Compared to {\sf{LDM-RM-NOUM}}, our scheme demonstrates a reduction of about $84.3$ and $66.6$ $\%$ under perfect and imperfect CSIT conditions, respectively.} Additionally, compared to {\sf{RM-OUM}}, our scheme exhibits reductions of approximately $88.4$ and $76.4$ $\%$ under perfect and imperfect CSIT conditions, respectively. 
{Consequently, the superior performance of our proposed scheme demonstrates its capability to ensure stable traffic demand satisfaction for both unicast and multicast messages, regardless of the CSIT conditions and user locations.}

\subsubsection*{{\bf Rate-matching error per multicast traffic demand}\rm}

We depict the average value of MAE (AMAE) for $\num{1000}$ channel realizations per different multicast traffic demands through Fig. \ref{result_6}. Herein, 
we assume that the LEO satellite is equipped with $36$ transmit antennas, arranged in a $6\times6$ UPA configuration, and the unicast traffic demands are fixed to $\mathbf{r}_{\sf target, uc} = [0.5, 0.5, 1, 1, 1.5, 2, 2.5, 2.5]^{\sf{T}}$ bps/Hz. 
In Fig. \ref{result_6}, the lower y-axis value indicates better performance. The solid line and dashed line represent the perfect and imperfect CSIT scenarios, respectively. Under both perfect and imperfect CSIT cases, the proposed {\sf{GPI-RS-NOUM}} framework outperforms benchmark schemes in terms of AMAE regardless of the multicast traffic demand. 
In particular, even when a perfect CSIT is not available, the proposed scheme shows a better AMAE level compared to {\sf{LDM-RM-NOUM}} and {\sf{RM-OUM}} with a perfect CSIT, thanks to the common stream and efficient resource reuse.
Moreover, when the multicast traffic demand exceeds $1.75$ bps/Hz, the proposed scheme with imperfect CSIT demonstrates superior AMAE performance compared to the {\sf{SCA-RM-NOUM}} with perfect CSIT.

\begin{figure}[!ht]
    \centering
{\includegraphics[width=.7\columnwidth]{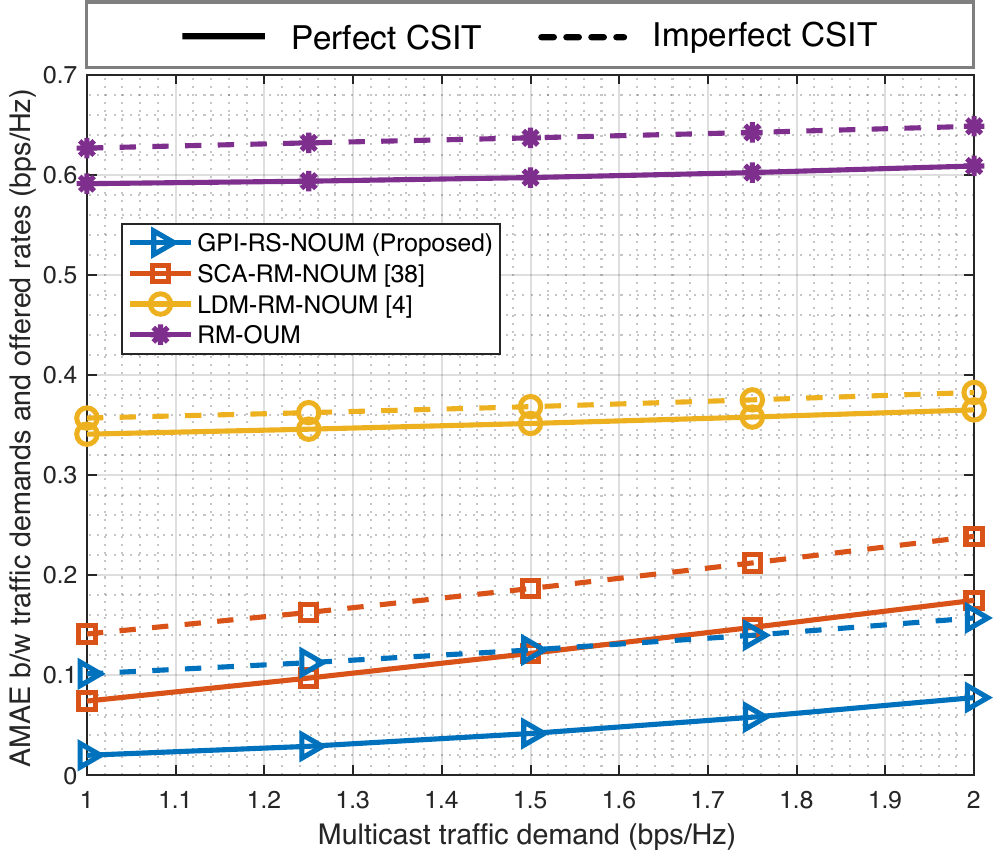}}\vspace{-7mm}
    \caption{Comparison of AMAE per multicast traffic demand. The unicast traffic demands are set as $\mathbf{r}_{\sf target, uc} = [0.5, 0.5, 1, 1, 1.5, 2, 2.5, 2.5]^{\sf{T}}$ bps/Hz.}
    \label{result_6}
    \vspace{-2mm}
\end{figure}
\begin{figure}[!ht]
    \centering
{\includegraphics[width=.7\columnwidth]{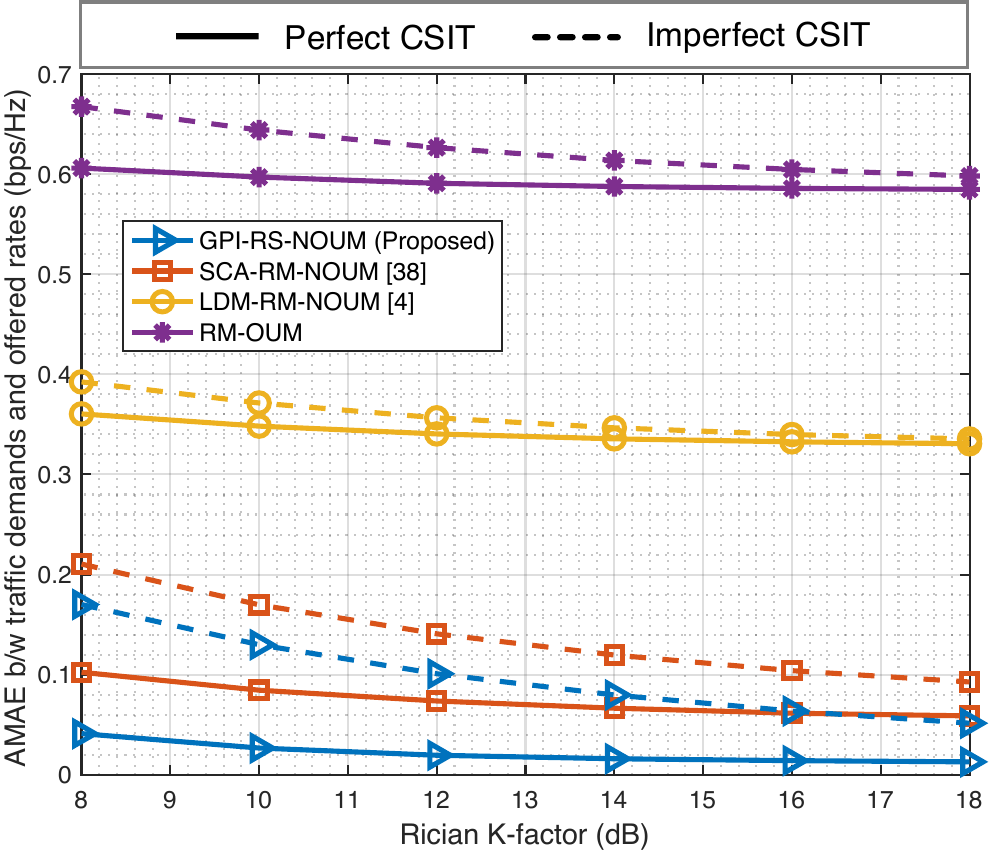}}\vspace{-7mm}
    \caption{Comparison of AMAE per Rician K-factor. The unicast and multicast traffic demands are set as $\mathbf{r}_{\sf target, uc} = [0.5, 0.5, 1, 1, 1.5, 2, 2.5, 2.5]^{\sf{T}}$ bps/Hz and $R_{\sf target, mc} = 1$ bps/Hz.}
    \label{result_7}
    \vspace{-7mm}
\end{figure}

\begin{figure}[!t]
\centering
    \subfigure[Perfect CSIT with $N_{\sf{t}} = 16$ {(i.e., $N_{\sf{t}}^x=N_{\sf{t}}^y=4$)}]{\includegraphics[width=.7\columnwidth]{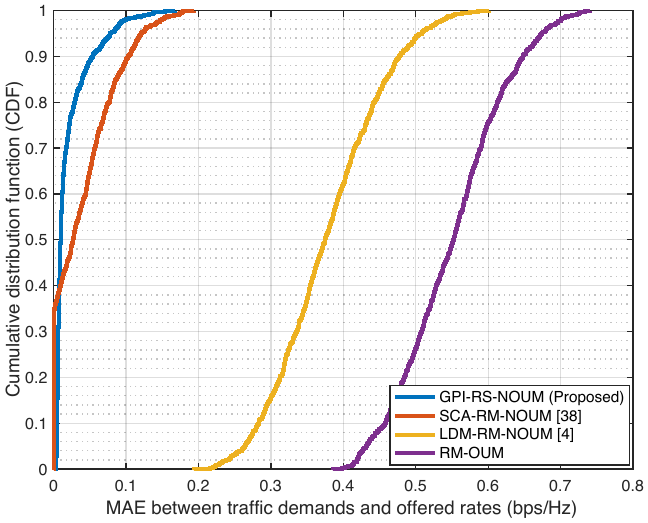}\label{result_8}}
     \vspace{-2mm}
    \\
    \subfigure[Imperfect CSIT with $N_{\sf{t}} = 16$ {(i.e., $N_{\sf{t}}^x=N_{\sf{t}}^y=4$)}]{\includegraphics[width=.7\columnwidth]{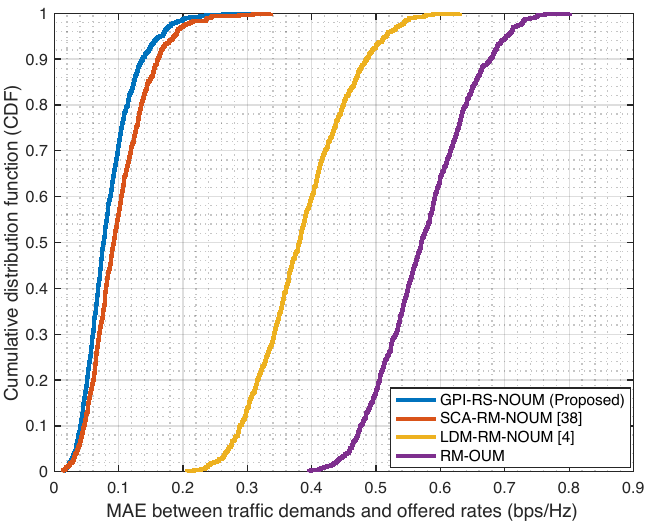}\label{result_9}}
\caption{
Comparison of CDF per MAE when the number of transmit antenna is set as $N_{\sf{t}} = 16$. The unicast and multicast traffic demands are $\mathbf{r}_{\sf target, uc} = [0.5, 0.5, 0.5, 1, 1, 1.5, 2, 2]^{\sf{T}}$ bps/Hz and $R_{\sf target, mc} = 0.5$ bps/Hz.}
\vspace{-2mm}
\end{figure}
\begin{figure}[!t]
\centering
    \subfigure[Perfect CSIT with $N_{\sf{t}} = 64$ {(i.e., $N_{\sf{t}}^x=N_{\sf{t}}^y=8$)}]{\includegraphics[width=.7\columnwidth]{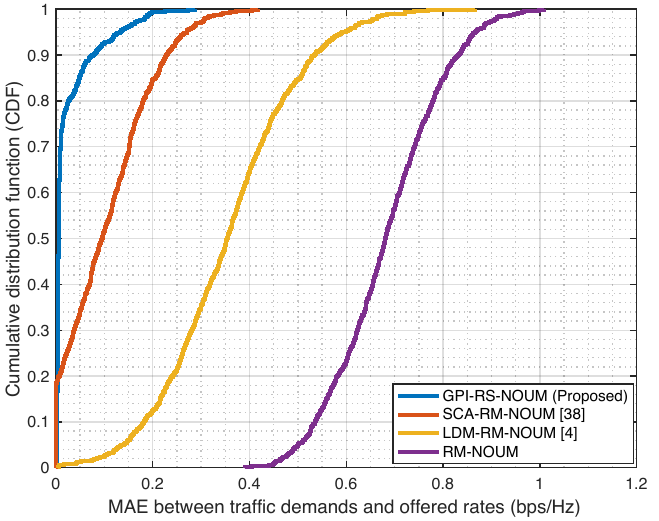}\label{result_10}}
    \vspace{-2mm}
    \\
    \subfigure[Imperfect CSIT with $N_{\sf{t}} = 64$ {(i.e., $N_{\sf{t}}^x=N_{\sf{t}}^y=8$)}]{\includegraphics[width=.7\columnwidth]{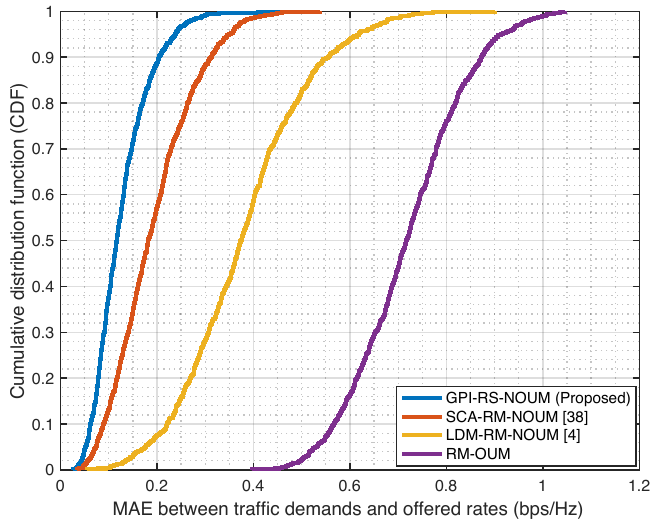}\label{result_11}}
\caption{
Comparison of CDF per MAE when the number of transmit antenna is set as $N_{\sf{t}} = 64$. The unicast and multicast traffic demands are $\mathbf{r}_{\sf target, uc} = [1, 1, 1.5, 1.5, 2, 2, 3, 3]^{\sf{T}}$ bps/Hz and $R_{\sf target, mc} = 1$ bps/Hz.}
\vspace{-2mm}
\end{figure}

\subsubsection*{{\bf Rate-matching error per Rician K-factor}\rm}

To shed light on the rate-matching error performance in comprehensive LEO SATCOM environments, we illustrate the AMAE per different Rician K-factor in Fig. \ref{result_7}. 
In Fig. \ref{result_7}, the lower value on the y-axis denotes superior performance, and the solid and dashed lines, respectively, illustrate the results under perfect and imperfect CSIT scenarios.
The number of transmit antenna $N_{\sf{t}}$ is set to be $N_{\sf{t}} = 36$ (i.e., $N_{\sf{t}}^x = N_{\sf{t}}^y = 6$), and the unicast and multicast traffic demands are respectively configured as $\mathbf{r}_{\sf target, uc} = [0.5, 0.5, 1, 1, 1.5, 2, 2.5, 2.5]^{\sf{T}}$ bps/Hz and $R_{\sf target, mc} = 1$ bps/Hz. 
The results clearly show the superior performance of the proposed {\sf{GPI-RS-NOUM}} framework in terms of AMAE, irrespective of the Rician K-factor, for both perfect and imperfect CSIT scenarios. 
This implies that the proposed scheme can reliably fulfill both unicast and multicast traffic demands, despite variations in the Rician K-factor due to elevation angle and scattering conditions of the target service area.
Furthermore, it can be observed that as the Rician K-factor increases, the performance gap between perfect and imperfect CSIT cases diminishes.
{This is because, as the Rician K-factor increases, the channel relies more on the geometrical relation of satellite-to-users due to the increased strength of the LOS path compared to the non-LOS paths,  tightening our upper bound for ergodic rate expressions.}


\subsubsection*{{\bf Rate-matching error in different numbers of transmit antennas and various traffic demand distributions}\rm}

To consider small- and large-scale antenna configurations, we conduct a numerical analysis under the scenarios with $N_{\sf{t}} = 16$ (i.e., UPA with $4\times4$ antennas) and $N_{\sf{t}} = 64$ (i.e., UPA with $8\times8$ antennas).
Through Fig. \ref{result_8} and Fig. \ref{result_9}, we first depict the CDF for MAE under perfect and imperfect CSIT cases when the number of transmit antennas $N_{\sf{t}}$ is set as $N_{\sf{t}} = 16$.
{In such a setup, the unicast and multicast traffic demands are set to be $\mathbf{r}_{\sf{target, uc}} = [0.5, 0.5, 0.5, 1, 1, 1.5, 2, 2]^{\sf{T}}$ bps/Hz and $R_{\sf{target, mc}} = 0.5$ bps/Hz, respectively.}
As can be observed from the figures, the proposed {\sf{GPI-RS-NOUM}} framework shows superior MAE performance compared to the benchmarks for both perfect and imperfect CSIT conditions.

Subsequently, the CDF for MAE with $N_{\sf{t}} = 64$ under perfect and imperfect CSIT are illustrated in Fig. \ref{result_10} and Fig. \ref{result_11}, respectively.
The traffic demands of unicast and multicast messages are set to be $\mathbf{r}_{\sf{target, uc}} = [1, 1, 1.5, 1.5, 2, 2, 3, 3]^{\sf{T}}$ bps/Hz and $R_{\sf{target, mc}} = 1$ bps/Hz, respectively.
From the simulations, we observe that the proposed scheme achieves a reduction of around $51$ and $32$ $\%$ in the $95$-percentile of MAE compared to {\sf{SCA-RM-NOUM}} under perfect and imperfect CSIT, respectively.
Compared to {\sf{LDM-RM-NOUM}}, the proposed scheme exhibits a decrease of approximately $77.7$ and $61.6$ $\%$ under perfect and imperfect CSIT conditions.
In addition, the proposed scheme shows a decrease of around $84.7$ and $74$ $\%$ compared to {\sf{RM-OUM}} under perfect and imperfect CSIT conditions.
{These imply that the proposed {\sf{GPI-RS-NOUM}} scheme stably satisfies the non-uniform unicast traffic demands and effectively provides the required multicast service, irrespective of the number of transmit antennas.}

{\subsubsection*{{\bf Convergence analysis}\rm}
Fig. \ref{result_convergence_f} and Fig. \ref{result_convergence_v} illustrate the convergence behavior of the proposed {\sf{GPI-RS-NOUM}} algorithm for random channel realizations under imperfect CSIT, where the user locations are randomly determined for each realization.
For $N_{\sf{t}}^{x}=N_{\sf{t}}^{y}=8$, the unicast and multicast traffic demands are set to $\mathbf{r}_{\sf target, uc} = [1, 1, 1.5, 1.5, 2, 2, 3, 3]^{\sf{T}}$ bps/Hz and $R_{\sf target, mc} = 1$ bps/Hz, respectively. For $N_{\sf{t}}^{x}=N_{\sf{t}}^{y}=6$, the traffic demands are set to $\mathbf{r}_{\sf target, uc} = [0.5, 0.5, 1, 1, 1.5, 2, 2.5, 2.5]^{\sf{T}}$ bps/Hz and $R_{\sf target, mc} = 1$ bps/Hz. For $N_{\sf{t}}^{x}=N_{\sf{t}}^{y}=4$, they are $\mathbf{r}_{\sf target, uc} = [0.5, 0.5, 0.5, 1, 1, 1.5, 2, 2]^{\sf{T}}$ bps/Hz and $R_{\sf target, mc} = 0.5$ bps/Hz.
It can be observed that the precoding vector $\bar{\mathbf{f}}$ converges after several iterations, while  ${\mathbf{v}}$ converges even more rapidly. These results demonstrate that the convergence of the proposed {\sf{GPI-RS-NOUM}} algorithm is well assured in various LEO SATCOM environments.}

\begin{figure}[!t]
\centering
    \subfigure[{Residual value of $\bar{\mathbf{f}}$}]{\includegraphics[width=.7\columnwidth]{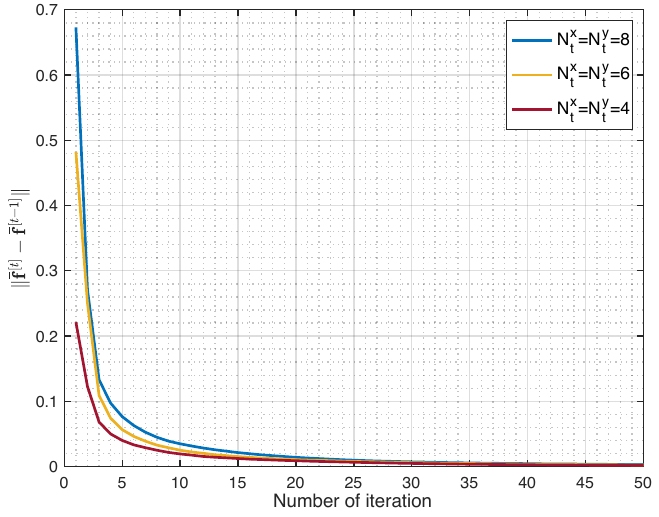}\label{result_convergence_f}}
    \vspace{-2mm}
    \\
    \subfigure[{ Residual value of ${\mathbf{v}}$}]{\includegraphics[width=.7\columnwidth]{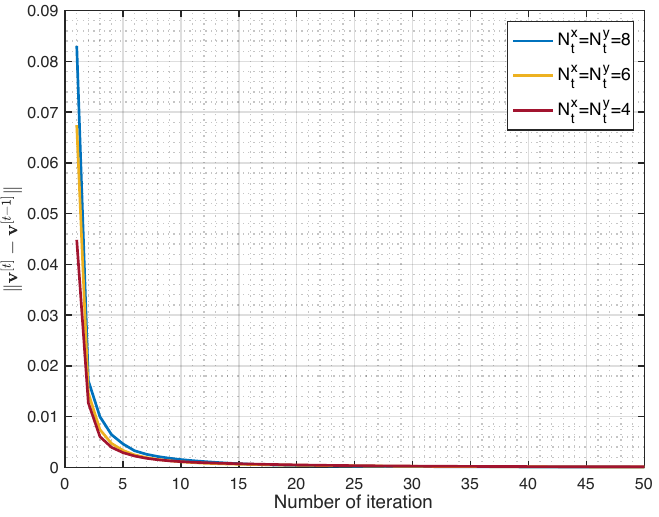}\label{result_convergence_v}}
\caption{
{ Convergence behavior of the proposed GPI-RS-NOUM algorithm under imperfect CSIT.}}
\vspace{-2mm}
\end{figure}



\section{Conclusion and Future Directions}
This paper has explored an RSMA-based RM framework for NOUM transmission in LEO SATCOM in the presence of imperfect CSIT.
{We have formulated} the problem of minimizing the difference between the traffic demands and actual offered rates for both unicast and multicast messages. To address the challenges posed by the non-smoothness and non-convexity of the formulated problem, we {have employed} the LogSumExp technique to approximate the minimum common rate. Subsequently, we {have reformulated} the common rate portion as the ratio of the approximated function, transforming the problem into an unconstrained form. 
We {have demonstrated} that the first-order KKT condition of the reformulated problem is cast as an NEPv. {To efficiently compute its principal eigenvector,} the GPI-RS-NOUM algorithm {has been} introduced.
{Extensive numerical analysis in diverse LEO SATCOM environments has verified} the superior performance of the proposed scheme in terms of traffic demand satisfaction compared to the benchmarks.

{For future directions, it is promising to consider multiple receive antennas at the receiver to further improve spectral efficiency under severe path attenuation of LEO SATCOM while minimizing the increased cost and energy consumption of mobile devices. This can be accomplished by introducing analog hardware, such as controllable phase shifters, or by optimizing the selection of antennas connected to the single RF chain based on the receive antenna selection technique \cite{9050842, toka2022outage}.
Additionally, adapting machine learning approaches, such as deep learning and deep reinforcement learning, in RSMA-based rate matching precoder design is promising. Further reducing the computational complexity to achieve more practical and efficient real-world implementations, considering the constraints on onboard resources and computation in LEO satellites, is also of interest.}


\section{Appendix A: Proof of Lemma \ref{lemma1}}
{The first-order KKT condition of the problem  (\ref{P5}) to $\bar{\mathbf{f}}$ is expanded as (\ref{APPENDIX_A_3}) at the top of the next page since we have}
\begin{align}
    \label{APPENDIX_A_1}
    \frac{\partial}{\partial{\bar{\mathbf{f}}}}
    \left\{\log_{2}    \left(\frac{\bar{\mathbf{f}}^{\sf{H}}\mathbf{A}^{\sf{p}}_{k}\bar{\mathbf{f}}}{\bar{\mathbf{f}}^{\sf{H}}\mathbf{B}^{\sf{p}}_{k}\bar{\mathbf{f}}}\right)\right\}
    =
    \frac{2}{\ln2} \left( 
    \frac{\mathbf{A}^{\sf{p}}_{k}}{\bar{\mathbf{f}}^{\sf{H}}\mathbf{A}^{\sf{p}}_{k}\bar{\mathbf{f}}}
    - \frac{\mathbf{B}^{\sf{p}}_{k}}{\bar{\mathbf{f}}^{\sf{H}}\mathbf{B}^{\sf{p}}_{k}\bar{\mathbf{f}}}
    \right) \bar{\mathbf{f}}
\end{align}
and
\begin{align}
    \label{APPENDIX_A_2}
    & \frac{\partial}{\partial{\bar{\mathbf{f}}}}
    \left\{\log\left( \frac{1}{K} \sum_{i=1}^{K}\exp\left(-\frac{1}{\alpha}
\log_{2}\left(\frac{\bar{\mathbf{f}}^{\sf{H}}\mathbf{A}_{i}^{\sf{c}}\bar{\mathbf{f}}}{\bar{\mathbf{f}}^{\sf{H}}\mathbf{B}_{i}^{\sf{c}}\bar{\mathbf{f}}}\right)\right)\right)^{-\alpha}\right\} \! = \! \frac{2}{\ln{2}} \times
    \nonumber \\
    &
    \sum_{i=1}^{K}\left({\frac{\exp\left(-\frac{1}{\alpha} \log_{2} \left( \frac{\bar{\mathbf{f}}^{\sf{H}}\mathbf{A}_{i}^{\sf{c}}\bar{\mathbf{f}}}{\bar{\mathbf{f}}^{\sf{H}}\mathbf{B}_{i}^{\sf{c}}\bar{\mathbf{f}}} \right)
    \right)}{\sum_{\ell=1}^{K}{\exp\left(-\frac{1}{\alpha} \log_{2} \left( \frac{\bar{\mathbf{f}}^{\sf{H}}\mathbf{A}_{\ell}^{\sf{c}}\bar{\mathbf{f}}}{\bar{\mathbf{f}}^{\sf{H}}\mathbf{B}_{\ell}^{\sf{c}}\bar{\mathbf{f}}} \right)
    \right)}}} 
    \left(
    \frac{{\mathbf{A}_{i}^{\sf{c}}}}{\bar{\mathbf{f}}^{\sf{H}}
    {\mathbf{A}_{i}^{\sf{c}}}{\bar{\mathbf{f}}}} -
    \frac{{\mathbf{B}_{i}^{\sf{c}}}}{\bar{\mathbf{f}}^{\sf{H}}
    {\mathbf{B}_{i}^{\sf{c}}}{\bar{\mathbf{f}}}}
    \right)
    \right)
    \nonumber \\
    & = \frac{2}{\ln{2}} \left( \mathbf{L}_{\sf{A}}(\bar{\mathbf{f}}) - \mathbf{L}_{\sf{B}}(\bar{\mathbf{f}}) \right) {\bar{\mathbf{f}}}.
\end{align} 

Subsequently, defining $\mathbf{A}(\bar{\mathbf{f}}, \mathbf{v})$, $\mathbf{B}(\bar{\mathbf{f}}, \mathbf{v})$, $\lambda(\bar{\mathbf{f}}, \mathbf{v})$, $\lambda_{\sf{num}}(\bar{\mathbf{f}}, \mathbf{v})$, and 
$\lambda_{\sf{den}}(\bar{\mathbf{f}}, \mathbf{v})$ as (\ref{A_kkt}), (\ref{B_kkt}), (\ref{lambda}), (\ref{lambda_num}), and (\ref{lambda_den}), respectively, the equation (\ref{APPENDIX_A_3}) is rearranged as
\begin{figure*}[!t]
\small
\begin{align}
    \label{APPENDIX_A_3}
    \frac{\partial{f(\bar{\mathbf{f}}, \mathbf{v})}} {\partial{\bar{\mathbf{f}}}} 
     & = \sum_{j=1}^{K} 
    \Bigg[
    -R_{{\sf{target}}, j} 
    \left\{ 
    \frac{\mathbf{A}^{\sf{p}}_{j}\bar{\mathbf{f}}}{\bar{\mathbf{f}}^{\sf{H}}\mathbf{A}^{\sf{p}}_{j}\bar{\mathbf{f}}}
    - \frac{\mathbf{B}^{\sf{p}}_{j}\bar{\mathbf{f}}}{\bar{\mathbf{f}}^{\sf{H}}\mathbf{B}^{\sf{p}}_{j}\bar{\mathbf{f}}}
    \right\}
    - R_{{\sf{target}}, j}  \frac{\mathbf{v}^{\sf{H}}\mathbf{E}_{j}\mathbf{v}}{\mathbf{v}^{\sf{H}}\mathbf{v}} 
    \left\{
    \mathbf{L}_{\sf{A}}(\bar{\mathbf{f}}) - \mathbf{L}_{\sf{B}}(\bar{\mathbf{f}})
    \right\}
    + 
     \log_{2}    \left(\frac{\bar{\mathbf{f}}^{\sf{H}}\mathbf{A}_{j}^{\sf{p}}\bar{\mathbf{f}}}{\bar{\mathbf{f}}^{\sf{H}}\mathbf{B}_{j}^{\sf{p}}\bar{\mathbf{f}}}\right)
    \left\{ 
    \frac{\mathbf{A}^{\sf{p}}_{j}\bar{\mathbf{f}}}{\bar{\mathbf{f}}^{\sf{H}}\mathbf{A}^{\sf{p}}_{j}\bar{\mathbf{f}}}
    - \frac{\mathbf{B}^{\sf{p}}_{j}\bar{\mathbf{f}}}{\bar{\mathbf{f}}^{\sf{H}}\mathbf{B}^{\sf{p}}_{j}\bar{\mathbf{f}}}
    \right\}
    \nonumber \\
    & + 
    \left( \frac{\mathbf{v}^{\sf{H}}\mathbf{E}_{j}\mathbf{v}}{\mathbf{v}^{\sf{H}}\mathbf{v}} 
    \right)^2
    \log\left( \frac{1}{K} \sum_{i=1}^{K}\exp\left(-\frac{1}{\alpha}
\log_{2}\left(\frac{\bar{\mathbf{f}}^{\sf{H}}\mathbf{A}_{i}^{\sf{c}}\bar{\mathbf{f}}}{\bar{\mathbf{f}}^{\sf{H}}\mathbf{B}_{i}^{\sf{c}}\bar{\mathbf{f}}}\right)\right)\right)^{-\alpha} 
\!\!\!\!\!\!
\left\{\mathbf{L}_{\sf{A}}(\bar{\mathbf{f}}) - \mathbf{L}_{\sf{B}}(\bar{\mathbf{f}})\right\}
+ \frac{\mathbf{v}^{\sf{H}}\mathbf{E}_{j}\mathbf{v}}{\mathbf{v}^{\sf{H}}\mathbf{v}}
     \log_{2}    \left(\frac{\bar{\mathbf{f}}^{\sf{H}}\mathbf{A}_{j}^{\sf{p}}\bar{\mathbf{f}}}{\bar{\mathbf{f}}^{\sf{H}}\mathbf{B}_{j}^{\sf{p}}\bar{\mathbf{f}}}\right)
     \left\{\mathbf{L}_{\sf{A}}(\bar{\mathbf{f}}) - \mathbf{L}_{\sf{B}}(\bar{\mathbf{f}})\right\}   
\nonumber \\
    & 
     + \frac{\mathbf{v}^{\sf{H}}\mathbf{E}_{j}\mathbf{v}}{\mathbf{v}^{\sf{H}}\mathbf{v}}
     \log\left( \frac{1}{K} \sum_{i=1}^{K}\exp\left(-\frac{1}{\alpha} 
\log_{2}\left(\frac{\bar{\mathbf{f}}^{\sf{H}}\mathbf{A}_{i}^{\sf{c}}\bar{\mathbf{f}}}{\bar{\mathbf{f}}^{\sf{H}}\mathbf{B}_{i}^{\sf{c}}\bar{\mathbf{f}}}\right)\right)\right)^{-\alpha}
\!\!\!\!\! \cdot 
\left\{ \frac{\mathbf{A}^{\sf{p}}_{j}\bar{\mathbf{f}}}{\bar{\mathbf{f}}^{\sf{H}}\mathbf{A}^{\sf{p}}_{j}\bar{\mathbf{f}}}
    - \frac{\mathbf{B}^{\sf{p}}_{j}\bar{\mathbf{f}}}{\bar{\mathbf{f}}^{\sf{H}}\mathbf{B}^{\sf{p}}_{j}\bar{\mathbf{f}}}
    \right\}
    \Bigg]
    + \eta_{\sf{mc}} \Bigg[ 
    -R_{{\sf{target, mc}}}  \frac{\mathbf{v}^{\sf{H}}\mathbf{E}_{K+1}\mathbf{v}}{\mathbf{v}^{\sf{H}}\mathbf{v}} 
    \left\{\mathbf{L}_{\sf{A}}(\bar{\mathbf{f}}) - \mathbf{L}_{\sf{B}}(\bar{\mathbf{f}})\right\}
    \nonumber \\
    & + 
    \left( \frac{\mathbf{v}^{\sf{H}}\mathbf{E}_{K+1}\mathbf{v}}{\mathbf{v}^{\sf{H}}\mathbf{v}} 
    \right)^2
    \log\left( \frac{1}{K} \sum_{i=1}^{K}\exp\left(-\frac{1}{\alpha}
\log_{2}\left(\frac{\bar{\mathbf{f}}^{\sf{H}}\mathbf{A}_{i}^{\sf{c}}\bar{\mathbf{f}}}{\bar{\mathbf{f}}^{\sf{H}}\mathbf{B}_{i}^{\sf{c}}\bar{\mathbf{f}}}\right)\right)\right)^{-\alpha} 
\!\!\!\!\!\!  
\left\{\mathbf{L}_{\sf{A}}(\bar{\mathbf{f}}) - \mathbf{L}_{\sf{B}}(\bar{\mathbf{f}})\right\} 
    \Bigg] = \mathbf{0}
\end{align}
\noindent\rule{\textwidth}{.5pt}
\end{figure*}
\begin{align}
\label{APPENDIX_A_5}
\mathbf{A}(\bar{\mathbf{f}}, \mathbf{v}) \bar{\mathbf{f}} = \lambda(\bar{\mathbf{f}}, \mathbf{v}) \mathbf{B}(\bar{\mathbf{f}}, \mathbf{v}) \bar{\mathbf{f}}.
\end{align}
We note that since $\mathbf{A}(\bar{\mathbf{f}}, \mathbf{v})$ and $\mathbf{B}(\bar{\mathbf{f}}, \mathbf{v})$ are expressed as the summation of positive definite (PD) matrices and positive semi-definite (PSD) matrices, both $\mathbf{A}(\bar{\mathbf{f}}, \mathbf{v})$ and $\mathbf{B}(\bar{\mathbf{f}}, \mathbf{v})$ are PD matrices; thus, $\mathbf{A}(\bar{\mathbf{f}}, \mathbf{v})$ and $\mathbf{B}(\bar{\mathbf{f}}, \mathbf{v})$ are invertible matrices.
Thanks to this feature, (\ref{APPENDIX_A_5}) can be rewritten as
\begin{align}
\label{APPENDIX_A_6}
     {\mathbf{B}(\bar{\mathbf{f}}, \mathbf{v})}^{-1}  
     {\mathbf{A}(\bar{\mathbf{f}}, \mathbf{v})}\bar{\mathbf{f}} = \lambda(\bar{\mathbf{f}}, \mathbf{v})\bar{\mathbf{f}},
\end{align}
and this completes the proof.


\section{Appendix B: Proof of Lemma \ref{lemma2}}
{The first-order KKT condition of the problem  (\ref{P5}) to ${\mathbf{v}}$ is expanded as (\ref{APPENDIX_B_2}) at the top of this page since we have}
\begin{align}
    \label{APPENDIX_B_1}
    \frac{\partial}{\partial{{\mathbf{v}}}}
    \left\{\frac{\mathbf{v}^{\sf{H}} \mathbf{E}_{k} \mathbf{v}}{\mathbf{v}^{\sf{H}} \mathbf{v}}\right\} 
    & =
    \frac{2\mathbf{E}_{k}\mathbf{v}}{\mathbf{v}^{\sf{H}} \mathbf{v}}
    -  \frac{2\mathbf{v}^{\sf{H}}\mathbf{E}_{k}\mathbf{v}}{(\mathbf{v}^{\sf{H}}\mathbf{v})^{2}} \mathbf{v}
    \nonumber \\
    & =
    2 \left( \frac{\mathbf{E}_{k}}{\mathbf{v}^{\sf{H}} \mathbf{v}}
    -\frac{\mathbf{v}^{\sf{H}}\mathbf{E}_{k}\mathbf{v}}{(\mathbf{v}^{\sf{H}}\mathbf{v})^{2}} \mathbf{I} \right) \mathbf{v}.
\end{align}

Subsequently, defining $\mathbf{D}(\bar{\mathbf{f}}, \mathbf{v})$, $\mathbf{E}(\bar{\mathbf{f}}, \mathbf{v})$, $\lambda(\bar{\mathbf{f}}, \mathbf{v})$, $\lambda_{\sf{num}}(\bar{\mathbf{f}}, \mathbf{v})$, and 
$\lambda_{\sf{den}}(\bar{\mathbf{f}}, \mathbf{v})$ as (\ref{D_kkt}), (\ref{E_kkt}), (\ref{lambda}), (\ref{lambda_num}), and (\ref{lambda_den}), respectively, the equation (\ref{APPENDIX_B_2}) is rearranged as
\begin{figure*}[!t]
\small
\begin{align}
    \label{APPENDIX_B_2}
    \frac{\partial{f(\bar{\mathbf{f}}, \mathbf{v})}} {\partial{{\mathbf{v}}}} 
     & =
\sum_{j=1}^{K} 
        \Bigg[ 
        -R_{{\sf{target}}, j} 
\left\{ 
 \frac{\mathbf{E}_{j}\mathbf{v}}{\mathbf{v}^{\sf{H}} \mathbf{v}}
    -
\frac{\mathbf{v}^{\sf{H}}\mathbf{E}_{j}\mathbf{v}}{(\mathbf{v}^{\sf{H}}\mathbf{v})^{2}} \mathbf{v}
\right\}
+ \log\left( \frac{1}{K} \sum_{i=1}^{K}\exp\left(-\frac{1}{\alpha}
\log_{2}\left(\frac{\bar{\mathbf{f}}^{\sf{H}}\mathbf{A}_{i}^{\sf{c}}\bar{\mathbf{f}}}{\bar{\mathbf{f}}^{\sf{H}}\mathbf{B}_{i}^{\sf{c}}\bar{\mathbf{f}}}\right)\right)\right)^{-\alpha}  
\!\!\!\!\!\! \cdot
\frac{\mathbf{v}^{\sf{H}}
\mathbf{E}_{j}\mathbf{v}}{\mathbf{v}^{\sf{H}}\mathbf{v}} 
\left\{ 
 \frac{\mathbf{E}_{j}\mathbf{v}}{\mathbf{v}^{\sf{H}} \mathbf{v}}
    -
 \frac{\mathbf{v}^{\sf{H}}\mathbf{E}_{j}\mathbf{v}}{(\mathbf{v}^{\sf{H}}\mathbf{v})^{2}} \mathbf{v}
\right\}
\nonumber \\
& + \log_{2}    \left(\frac{\bar{\mathbf{f}}^{\sf{H}}\mathbf{A}_{j}^{\sf{p}}\bar{\mathbf{f}}}{\bar{\mathbf{f}}^{\sf{H}}\mathbf{B}_{j}^{\sf{p}}\bar{\mathbf{f}}}\right) \left\{ 
 \frac{\mathbf{E}_{j}\mathbf{v}}{\mathbf{v}^{\sf{H}} \mathbf{v}}
    -
\frac{\mathbf{v}^{\sf{H}}\mathbf{E}_{j}\mathbf{v}}{(\mathbf{v}^{\sf{H}}\mathbf{v})^{2}} \mathbf{v}
\right\} \Bigg]
 + \eta_{\sf{mc}} \Bigg[ -R_{\sf{target, mc}} \left\{ 
 \frac{\mathbf{E}_{K+1}\mathbf{v}}{\mathbf{v}^{\sf{H}} \mathbf{v}}
    -
\frac{\mathbf{v}^{\sf{H}}\mathbf{E}_{K+1}\mathbf{v}}{(\mathbf{v}^{\sf{H}}\mathbf{v})^{2}} \mathbf{v}
\right\}
\nonumber \\
& + \log\left( \frac{1}{K}  \sum_{i=1}^{K}\exp\left(-\frac{1}{\alpha}
\log_{2}\left(\frac{\bar{\mathbf{f}}^{\sf{H}}\mathbf{A}_{i}^{\sf{c}}\bar{\mathbf{f}}}{\bar{\mathbf{f}}^{\sf{H}}\mathbf{B}_{i}^{\sf{c}}\bar{\mathbf{f}}}\right)\right)\right)^{-\alpha} 
\!\!\!\!\!\! \cdot
\frac{\mathbf{v}^{\sf{H}}\mathbf{E}_{K+1}\mathbf{v}}{\mathbf{v}^{\sf{H}}\mathbf{v}} 
\left\{ 
 \frac{\mathbf{E}_{K+1}\mathbf{v}}{\mathbf{v}^{\sf{H}} \mathbf{v}}
    -
\frac{\mathbf{v}^{\sf{H}}\mathbf{E}_{K+1}\mathbf{v}}{(\mathbf{v}^{\sf{H}}\mathbf{v})^{2}} \mathbf{v}
\right\} \Bigg] = \mathbf{0}
\end{align}
\noindent\rule{\textwidth}{.5pt}
\end{figure*}
\begin{align}
\label{APPENDIX_B_3}
\mathbf{D}(\bar{\mathbf{f}}, \mathbf{v}) {\mathbf{v}} = \lambda(\bar{\mathbf{f}}, \mathbf{v}) \mathbf{E}(\bar{\mathbf{f}}, \mathbf{v}) {\mathbf{v}}.
\end{align}
We note that since $\mathbf{D}(\bar{\mathbf{f}}, \mathbf{v})$ and $\mathbf{E}(\bar{\mathbf{f}}, \mathbf{v})$ are expressed as the summation of PD matrices and PSD matrices, both $\mathbf{D}(\bar{\mathbf{f}}, \mathbf{v})$ and $\mathbf{E}(\bar{\mathbf{f}}, \mathbf{v})$ are PD matrices; thus, $\mathbf{D}(\bar{\mathbf{f}}, \mathbf{v})$ and $\mathbf{E}(\bar{\mathbf{f}}, \mathbf{v})$ are invertible matrices.
Thanks to this feature, (\ref{APPENDIX_A_5}) can be rewritten as
\begin{align}
\label{APPENDIX_B_4}
     {\mathbf{E}(\bar{\mathbf{f}}, \mathbf{v})}^{-1}  
     {\mathbf{D}(\bar{\mathbf{f}}, \mathbf{v})}{\mathbf{v}} = \lambda(\bar{\mathbf{f}}, \mathbf{v}){\mathbf{v}},
\end{align}
and this completes the proof.

\bibliographystyle{IEEEtran}

\bibliography{jhseong_reff}
\end{document}